\newtheorem{fact}{Fact}
\newtheorem{defi}{Definition}
\newtheorem{lemma}{Lemma}
\newtheorem{theorem}{Theorem}
\begin{document}

%
\title{A local constant-factor approximation algorithm for MDS problem in anonymous network}


\titlerunning{A local constant-factor approximation algorithm for MDS problem in anonymous network}

%
\author{
Wojciech Wawrzyniak\inst{1}\thanks{The research supported by grant N N206 565740.}\\
}

\institute{
Faculty of Mathematics and Computer
Science, Adam Mickiewicz University, Pozna\'n, Poland,
\email{wwawrzy@amu.edu.pl}
 }

\authorrunning{}

\clearpage

\maketitle

\begin{abstract}

In research on distributed local algorithms it is commonly assumed that each vertex has a unique identifier in the entire graph. However, it turns out that  in case of certain classes of graphs (for example not lift-closed bounded degree graphs) identifiers are unnecessary and only a port ordering is needed \cite{GHS_12a}. One of the open issues was whether identifiers are essential in planar graphs. In this paper, we answer this question and we propose an algorithm which returns constant approximation of the MDS problem in $\mathcal{CONGEST}$ model. The algorithm doesn't use any additional information about the structure of the graph and the nodes don't have unique identifiers.
We hope that this paper will be very helpful as a hint for further comparisons of the  unique identifier model and the model with only a port numbering
in other classes of graphs.

\end{abstract}

\setcounter{tocdepth}{2}

%
%

\vfill \eject
\section{Introduction}
	In recent years, there has been a growing interest in designing distributed local algorithms. 
It might come out of the easiness of applying these algorithms in reality. They run very fast (in constant time) and are tolerant to the network structure changes and node failures. 
It turns out the running time of these algorithms is completely decoupled from the size of the network and each node takes its decision based only on the knowledge about its k-neighbourhoods.   
This fact is very important for the scalability of an algorithm in large networks. If the structure of the network changes (i. e. a vertex is removed), then an algorithm must be re-called to repair a solution only for a small surrounding of the removed vertex. It is a significantly faster solution than in case of standard algorithms requirements, which require re-execution of the algorithm on the entire network. \par
	In some research on designing local algorithms(but not strictly local), it is allowed that nodes have a knowledge about the $f(n)$-neighbourhood, where $f(n)$ is a function that depends on the number of vertices in the network. However, in this paper we only consider strictly local algorithms, that do not need any additional information about the structure of the graph and don't have unique identifiers, so they satisfy much stronger assumptions. \par
	In recent years, several deterministic distributed local algorithms have been proposed. They return solutions that are good approximations of various problems (e.g. minimum edge cover, minimal dominating set\cite{LOW_10tik}, semi-matching\cite{CHKSW11, CHSW12a}), in constant time in different classes of graphs (e. i. bounded degree graphs, planar graphs). However, these algorithms very often assume that nodes have unique identifiers. This assumption could be very important if we consider a more ''real'' model, in which in a single communication round, each vertex can send a message which contains  at most $O(\log{n})$ bits, where $n = |V(G)|$ is the number of vertices in the graph. This limitation makes it impossible to e.g. detect small cycles in the network, gather knowledge of 2-hop neighbourhoods. 
	Recently in a paper \cite{GHS_12a} the authors G\"{o}\"{o}s et al. 
 have shown that for lift-closed bounded degree graphs, a model with unique identifiers (known as $\mathcal{LOCAL}$ \cite{L_92a}) and model with a port numbering only(known as PO model\cite{GHS_12a}), are practically equivalent. However, techniques used in their work do not allow us to consider the equivalence of these models for Minimum Dominating Set($MDS$) problem in planar graphs. It is known\cite{LOW_10tik} that there exists an algorithm for planar graphs which, in constant time, returns a constant approximation of the $MDS$ in  model with unique identifiers and an unbounded message size. \par
	It turns out that there also exists a strictly local algorithm for planar graphs, that in the model without unique identifiers and with upper bounded message size, finds constant approximation of the Minimum Dominating Set. 
\subsection{Related Work}\parskip 0pt 
A distributed algorithm is called a local algorithm if it completes in a constant number of synchronised communication rounds. If we assume that the nodes do not have any additional information about the other vertices, then we say that our algorithm is {\it strictly local}. \par 
The research on local algorithms has been ongoing for several years (\cite{ABI_86,CV_86,II_86,L_92a,Luby_86,NS_95,Peleg_00}), but the strictly local algorithms gained the increased interest just recently. There are now more than one hundred works referring, more or less closely, to the topic of such algorithms. Thus, it is not possible to briefly describe all of these publications. The best way to study this topic is to read excellent survey\cite{Suomela_13} written by Suomela. That article describes all the important results obtained so far by all the researchers. One of many open questions is an issue raised in a paper \cite{GHS_12a} concerning the similarity of two models: a model with unique identifiers and a model with only a port numbering for MDS problem in planar graphs. We answer this question. \par
	One of the first papers, that considered network without unique identifiers, has been written by Angluin \cite{Angluin_80}. Unfortunately, in 1992, Linial showed in \cite{L_92a} that there is no algorithm that, in constant time, finds a Maximal Independent Set in a cycle in the unique identifiers model. This result shows how difficult it is to find a fast distributed algorithm and it is even more difficult if we consider {\it strictly local} algorithms only. Thankfully, in 1995 Naor and Stockmeyer in \cite{NS_95}  introduced the concept of Local Checkable Labelling(${LCL}$) problems and showed that if there is a local algorithm in a model with unique identifiers on nodes then there is also order-invariant local algorithm which uses only the fact that for each pair $v,u$ of vertices $id(v)<id(u)$ or $id(v) > id(u)$. So from the point of view of the ${LCL}$ problems both models are almost equivalent. Note that the class of ${LCL}$ problems contains among others the maximal independent set or vertex colouring. Thus, a natural question then came up, whether there exists an algorithm which, without information about the sequence of vertices is able to solve any non-trivial problem. 
	Kuhn and Wattenhofer in \cite{KW_05}, presented the first local but randomized algorithm for bounded degree graphs. Their algorithm does not require long messages. Then in \cite{KMW_06} the algorithm has been improved by Kuhn et al. 
 Notice that both approaches used the method of linear programming. The first local algorithm for MDS problem for planar graphs was proposed by Lenzen et al. 
in \cite{LOW_10tik}, but their algorithm requires long messages and unique IDs on nodes. \par
	There is also a lower bound for possible approximation factor of an algorithm.  In \cite{CHW_08a} 
it has been shown that there is no algorithm which in a constant number of communication rounds returns an $(5-\epsilon)$ approximation of the MDS in planar graphs.

\subsection{Main Results and Organisation}\parskip 0pt 
Our main result is summarised in the following theorem. Let $M$ denote an arbitrary MDS in a planar graph $G=(V,E)$. 
\begin{theorem}\label{thm:main}
	Let $G=(V,E)$ be a planar graph and $D$ be a set returned by algorithm {\it PortNumberingMds}. Then 
$|D| \leq O(|M|).$
\end{theorem}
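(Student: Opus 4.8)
The plan is to bound $|D|$ by a charging argument that rests on the sparsity of planar bipartite graphs. I would first fix the dominating set $M$ and recall the standard Euler-formula fact that a simple bipartite planar graph with parts $A$ and $B$ has at most $2(|A|+|B|)-4$ edges; hence if every vertex of $A$ has degree at least $3$ toward $B$, then $(3-2)|A| < 2|B|$, i.e. $|A| < 2|B|$. This edge-count inequality is the engine of the whole estimate, so the real work lies in arranging the vertices that $\mathit{PortNumberingMds}$ outputs so that the inequality becomes applicable. Throughout I would exploit that any auxiliary bipartite graph I build on vertex subsets of $V$, using only edges of $G$, inherits a planar embedding from $G$ and is therefore itself planar.

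I would then decompose the output according to the phases in which vertices are added, writing $D = D_1 \cup D_2 \cup \cdots$ over the constantly many phases, and bound each part against $|M|$ separately. For the ``dominator'' vertices selected first, I would form the bipartite graph $H$ on $(D_1, M)$ by joining each $v \in D_1$ to the vertices of $M$ that witness its selection (those dominating $v$, or those $v$ dominates, according to the rule). I would split $D_1$ into its low-degree part, consisting of vertices adjacent in $H$ to at most two members of $M$, which I charge directly to give at most $O(1)$ per element of $M$, and its high-degree part, to which the planar sparsity bound applies and yields $O(|M|)$. Summing, $|D_1| = O(|M|)$.

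The delicate part, and where the absence of unique identifiers really bites, is controlling the repair vertices added in the later phases to restore full domination. With unique identifiers one breaks ties cleanly and each $m \in M$ can be shown to ``serve'' only boundedly many output vertices; under a port numbering only, I expect the main obstacle to be ruling out that many vertices of $D_2$ are all charged to the same $m \in M$. I would attack this by verifying, case by case over the local selection rule used by $\mathit{PortNumberingMds}$, that the witness relation still forms a planar bipartite graph of bounded average degree, so that the same edge-count argument caps the number of competing neighbours of each $m$, again by planarity. Establishing this uniform bound without the symmetry-breaking power of identifiers is the crux; once it is in place, each phase contributes $O(|M|)$, and summing over the constantly many phases gives $|D| \le O(|M|)$.
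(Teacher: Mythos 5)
There is a genuine gap, and it sits exactly at the step you yourself flag as the crux. Your plan is to build a bipartite ``witness'' graph between output vertices and $M$, observe it is planar, and invoke the $2n$-edge bound for planar bipartite graphs. Two things go wrong. First, the witness relation for the vertices selected in the first phases is a \emph{distance-two} relation: a vertex $d(v)\in D_1$ is chosen because it has many neighbours in the auxiliary set $X$, and those $X$-vertices are in turn dominated by $M$; the relevant objects are paths $v$--$x$--$m$ of length two, not edges of $G$. The bipartite graph you get by joining $v$ to $m$ whenever such a path exists is \emph{not} a subgraph of $G$, and after contracting the middle vertices you obtain a planar \emph{multigraph} in which a single pair $(v,m)$ can be joined by arbitrarily many parallel length-two paths. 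Planar sparsity bounds the number of distinct parallel classes (the paper's ``bunches''), but says nothing about how many paths each class contains, which is precisely the quantity you need to control. Second, your direct charge for the ``low-degree part'' is unsound on its own: a star $K_{1,k}$ is planar, so a single $m\in M$ can be the unique witness for unboundedly many output vertices, and charging each of them to $m$ gives $k$ charges on one vertex of $M$, not $O(1)$. Planarity alone never caps the number of output vertices sharing one witness; some property of the algorithm's selection rule must be invoked.

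The paper closes exactly this hole with its two central lemmas, neither of which has an analogue in your sketch. It first shows (via a spanning-tree and ``sides'' argument on the contracted multigraph) that the number of bunches is $O(|Y_L|+|M|)$, and then proves the key structural fact that any bunch containing $b_B\geq 5$ special paths must enclose at least $\lceil (b_B-3)/2\rceil$ \emph{distinct} vertices of $M$ inside its bounded region. That second lemma is where the algorithm's greedy rule enters: because each interior vertex $x_i$ of a bunch was selected as a maximum-residual-degree vertex in someone's neighbourhood, the Jordan-curve regions between consecutive special paths are forced to contain additional vertices of $M$, which is what finally lets $|M|$ absorb the total number of special paths. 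Without an argument of this kind --- tying the geometry of the planar embedding to the specific max-degree selection rule --- the edge-count machinery you propose cannot by itself yield $|D|=O(|M|)$.
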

The rest of this paper is structured as follows. We begin by describing the computational model and notation used in this paper. Then in section \ref{sec:Algorithm} we briefly introduce the principle of our algorithm and its formal pseudocode. Next, in section \ref{sec:Analyse}, we present the analysis of the correctness of our algorithm, and compute the approximation factor of the algorithm. At the end, in section \ref{sec:Conclusion}, we summarise our considerations.

\subsection{Model and Notation}\parskip 0pt 
In this paper we work in a synchronous communication model and as a representation of the network we use a~planar graph $G=(V,E)$. Edges in the graph will correspond to communication links and processors will correspond to vertices from the set $V$.
Moreover, we assume that each vertex has its own labelling of its incident edges and vertices do not have unique identifiers and also do not have any additional information. \par
	In order to facilitate the reader to understand this paper, we use the same notations as in \cite{LOW_10tik}. For nodes $A \subseteq V$ we define the set of inclusive neighbourhood of $A$ as $N^+_A:=\left\{ v: v\in A \vee \exists_{e=uv \in E} u\in A \right\}$. We also denote the neighbours of $A$ not in $A$ as $N_A := N^+_A \setminus A$. To simplify the notation in cases where $A=\left\{a\right\}$ we may omit the braces, e.g. $N_a$ instead of $N_{\{a\}}$.
	
\section{Constant approximation in $\mathcal{CONGEST}$ model}\parskip 0pt 
\subsection{Algorithm}\label{sec:Algorithm}\parskip 0pt 
	The key idea of the algorithm is based on an appropriate use of planarity of the graph $G$. Intuitively, some vertex $v$ should belong to the dominating set $D$ if it dominates a lot of its neighbours. However, in reality, such approach does not give a constant approximation as we can see in the Figure \ref{fig:example_bad_approximation}.  
	This situation occurs if graph $G$ contains many vertices with big common neighbourhood. In our algorithm we first dominate only a small subset of these vertices (step \ref{alg_D1} and \ref{alg_D2} of the algorithm). So we avoid unnecessary adding of multiple vertices which dominate the same or almost the same neighbourhoods. \par
{
\begin{algorithm}
\caption{PortNumberingMds}\label{Alg_MinimalDS}
\begin{algorithmic}[1]
\State $D := \emptyset$.
\State $D_1 :=$ {\it Hop2Dominate(G,D)}, \ \ $D := D \cup D_1$ \label{alg_D1}
\State $D_2 :=$ {\it Hop2Dominate(G,D)}, \ \ $D := D \cup D_2$ \label{alg_D2}

\For {$v\ \in V$ \textrm{ in parallel}}
	\State $\delta^{V \setminus N^+_D}_v := |N^+_v \setminus N^+_D|$ 
	\If {$v \notin N^+_D$}
		\State $\mu_v := \max_{w \in (N^+_v \cap N_D)}{\{\delta^{V \setminus N^+_D}_w\}}$
		\State choose any $w(v) \in \{w \in (N^+_v \cap N_D) : \delta^{V \setminus N^+_D}_w = \mu_v\}$ \label{alg_w(v)}
		\State $D_3:= \{ w(v): v \notin N^+_D \}$, $D:=D \cup D_3$  \label{alg_D3}
	\EndIf
\EndFor
\State \Return{ $D$}
\end{algorithmic}
\end{algorithm}
}
\vspace{-15pt}
{
\begin{algorithm}\floatname{algorithm}{Function}
\caption{Hop2Dominate(G,D)}\label{Alg_2hop_dominate}
\begin{algorithmic}[1]
\For {$v \in V$ \textrm{ in parallel}}
	 $\delta^{V \setminus D}_v := |N^+_v \setminus N^+_D|$ 
\EndFor
\For {$v \in V \setminus D$ \textrm{ in parallel}}
	\State $\Delta^{V \setminus D}_v := \max_{w \in N^+_v}{\{\delta_w^{V \setminus D}\}}$ 
	\State choose any $x(v) \in \{u \in N^+_v : \delta^{V \setminus D}(u) = \Delta^{V \setminus D}_v \}$ \label{alg_x(v)}
	\State $X := X \cup \{ x(v) \}$ \label{alg_X}
\EndFor

\For {$v \in V$ \textrm{ in parallel}}
	\State $\delta^X_v := |N^+_v \cap X|$ 
	\If {$v \in X$}
		\State $\xi_v := \max_{w \in N^+_v}{\{\delta^X_w\}}$ 
		\State choose any $d(v) \in \{w \in N^+_v : \delta^X_w = \xi_v\}$ \label{alg_d(v)}
		\State $D_{new}:= \{d(v): v\in X\}$
	\EndIf
\EndFor
\State \Return {$D_{new}$}
\end{algorithmic}
\end{algorithm}
}

	In the next round each vertex not dominated yet adds to the set $D$ a dominated vertex with biggest residual degree from its dominated neighbourhood. The planarity of the graph $G$ ensures that there is a small number of such added vertices. 
	To prove that both sets are small, we will use well known fact that Jordan curve divides the plane into two regions - an {\it interior} and an {\it exterior}, so that any cycle in a planar graph $G$ divides the graph into two parts without edges between their interiors. We partition our plane graph into disjoint regions in such way that the number of regions are proportional to the size of the set $D$ and moreover, in each region there is at least one vertex from the set $M$.

\parskip 0pt 
\subsection{Analysis}\label{sec:Analyse}\parskip 0pt 
	As can be easily seen, the algorithm can be performed in a constant number of communication rounds and returns a dominating set due to last round (step \ref{alg_D3}), where all not dominated vertices add exactly one of their  neighbours to the dominating set $D$. Therefore, in our analysis we only need to show that the numbers of vertices added to the dominating set $D$ in steps \ref{alg_D1}, \ref{alg_D2} and \ref{alg_D3} are small enough that our algorithm returns solutions which are a constant approximation of an optimal MDS. 
	To simplify notation in our analysis, we assume that the set of vertices added in step \ref{alg_D1}, \ref{alg_D2} and \ref{alg_D3} will be denoted by $D_1$, $D_2$ and $D_3$ respectively, and some fixed optimal solution will be denoted as $M$. We need to recall the following well-known lemma.
\begin{lemma}\label{lem_planar_graph}
A minor of a planar graph is planar. A planar graph of $n$ nodes has less than $3n$ edges. A planar bipartite graph of $n$ nodes has less than $2n$ edges.
\end{lemma}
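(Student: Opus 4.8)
The plan is to prove Lemma~\ref{lem_planar_graph}, which collects three standard facts about planar graphs. Although the paper calls it ``well-known,'' I would still sketch the three parts, since they each follow from Euler's formula and its consequences. The three claims are: (1) a minor of a planar graph is planar; (2) a planar graph on $n$ nodes has fewer than $3n$ edges; and (3) a planar bipartite graph on $n$ nodes has fewer than $2n$ edges. I would treat each in turn.

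For the minor claim, I would argue that the class of planar graphs is closed under the three minor operations: deleting an edge, deleting a vertex, and contracting an edge. Given a planar embedding of $G$, deleting an edge or vertex clearly leaves a valid embedding, since we simply erase the corresponding arc or point from the plane. The only operation requiring a moment's thought is edge contraction: if $e=uv$ is drawn as an arc in the plane, we can contract it by sliding $u$ along the arc toward $v$, dragging its incident edges with it; this continuous deformation keeps all other edges non-crossing and yields a planar embedding of $G/e$. Since every minor is obtained by a finite sequence of these operations, planarity is preserved.

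For the two edge-count bounds, the standard tool is Euler's formula $n - m + f = 2$ for a connected plane graph with $n$ vertices, $m$ edges, and $f$ faces, together with a face-counting (double-counting) argument. In the general case, each face is bounded by at least $3$ edges and each edge borders at most $2$ faces, giving $3f \le 2m$; substituting $f = 2 - n + m$ yields $m \le 3n - 6 < 3n$. In the bipartite case the shortest cycle has length at least $4$ (no triangles), so each face is bounded by at least $4$ edges, giving $4f \le 2m$ and hence $m \le 2n - 4 < 2n$. For disconnected graphs, or graphs with too few edges to have faces bounded by cycles, one either applies the bound componentwise or notes the inequality holds trivially, so the stated strict bounds with $3n$ and $2n$ follow in all cases.

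The main obstacle, if any, is purely a matter of rigor rather than difficulty: the face-counting argument implicitly assumes enough structure (e.g.\ that $G$ is simple and has at least a cycle so that every face boundary is a closed walk of length at least $3$ or $4$). I would handle the degenerate cases (graphs that are forests, or have one or two vertices) separately, observing that trees satisfy $m = n-1 < 2n$ outright, so the bounds hold a~fortiori. Since the lemma is invoked later only to bound edge counts in auxiliary planar (and bipartite planar) graphs derived from $G$, the strict inequalities $m < 3n$ and $m < 2n$ are exactly what the subsequent region-counting analysis needs, and no sharper constant is required.
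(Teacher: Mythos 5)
Your proof is correct and is the standard argument (minor-closedness via edge/vertex deletion and contraction, plus Euler's formula with the $3f\le 2m$ and $4f\le 2m$ face-counting bounds); the paper itself offers no proof of this lemma, citing it only as well known, so there is nothing to diverge from. Your attention to the degenerate cases (forests, disconnected graphs) is a reasonable extra precaution and does not change the substance.
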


	We will begin the analysis of our algorithm with estimating the maximal number of vertices added to the set $D_1 \setminus M$. To bound this value we need to define a special subgraph $G_1$ of graph $G$. 

\begin{defi}\label{definition_arrow_graph}
	Let graph  $G_1=(V_{1},E_{1})$ be a subgraph of $G=(V_G,E_G)$ constructed in the 
following way:
{
\vspace{-5pt}
\begin{itemize}\itemsep=-2pt \topsep=-4pt
\item[i)] $V_{1} := X \cup D_1$ and $E_{1} = \emptyset$, where $X$ is a set from step \ref{alg_D1} of the algorithm.
\item[ii)] Add all edges between vertices from $V_{1}$.  
\item[iii)] Add minimal number of edges (from $E_G$) and nodes (from $V_G$) such that each vertex from the current set $V_1$ has adjacent vertex from the set $M$ or is contained in the set $M$ (so for each $v\in V_1$ we have $N^+_v(G_1) \cap M \neq \emptyset$).  
\end{itemize}
}
\end{defi}
\noindent In order to simplify the description of proofs, we will also introduce the following notation (see Figure \ref{fig:Graph_G1}):
\begin{spreadlines}{0pt}
\begin{align*}
\setlength{\belowdisplayskip}{0pt}%
\setlength{\abovedisplayskip}{0pt}%
X_M&:= \left\{ v:v \in (X \cap M) \right\}, & &Y_M:= \left\{ d(v):v \in X_M \right\}  ,\\ 
X_S&:= \{ v:v \in X\setminus M \wedge |N^+_{d(v)} \cap X| \leq c   \}, & &Y_S:= \left\{ d(v):v \in X_S \right\}  ,\\ 
X_L&:= \left\{ v: v \in X \setminus (X_M \cup X_S)  \right\}, & &Y_L:= \left\{ d(v):v \in X_L, \right\}, \\
E_i&:=\left\{ \{v,x\}: v\in (Y_L \setminus M) \wedge x \in X_i \right\}, & & i \in \left\{M,S,L\right\}, \\
Y&:= Y_M \cup Y_S \cup Y_L
\end{align*}
\end{spreadlines} 
where $d(v)$ is a vertex chosen in the step \ref{alg_d(v)} of the algorithm.  			
	Notice that not all of the subsets are disjoint, for example, it is possible that some fixed vertex $v$ belongs to both sets $Y_L$ and $X_L$ ($v\in Y_L \cap X_L$). 
\begin{figure}[ht!]
	\centering
	\includegraphics[width=0.55\textwidth]{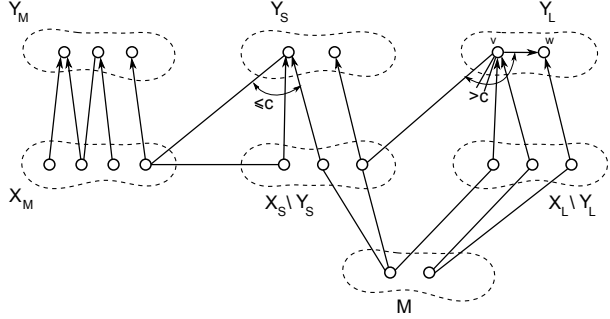}
	\caption{An example of the graph $G_1$. If $d(v)=u$ then edge $e=vu$ is marked by an arrowhead. }
	\label{fig:Graph_G1}
\end{figure}
	To show that the maximal number of vertices in the set $D_1 \setminus M \subseteq Y_M \cup Y_S \cup Y_L$ is comparable to the order of the set $M$, we will consider the size of each set $Y_M$, $Y_S$, and $Y_L$ separately.
	This analysis is contained in Lemma \ref{lem_y1}, Lemma \ref{lem_y2} and Lemma \ref{lem_y3}. \par

At the beginning we will prove, a simple but very useful fact.
	
\begin{fact}\label{fac_y_less_than_x}
	$ |Y_i| \leq |X_i| \textrm{\ \ for each \ \ } i \in \left\{M,S,L\right\} $
\end{fact}
\begin{proof}
	Note that the vertex from the set $Y_i$ has been added in step \ref{alg_D1} of the algorithm by one of the vertices in $X_i$. In addition, each vertex $x \in X_i$  adds at most one vertex to $D_1$. Thus, the order of the set $Y_i$ cannot be greater than the order of the set $X_i$.
\end{proof}

\begin{lemma}\label{lem_y1}
$|X_M| \leq |M |\textrm{\ \  and \ \ } |Y_M| \leq |M|.$
\end{lemma}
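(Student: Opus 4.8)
The plan is to observe that both inequalities fall out directly from the definition of $X_M$ together with the already-established Fact~\ref{fac_y_less_than_x}, so essentially no new work is needed here; this lemma is the trivial warm-up case before the genuinely substantive bounds on $Y_S$ and $Y_L$.

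First I would handle $|X_M| \leq |M|$. By definition $X_M = \{v : v \in X \cap M\}$, i.e. $X_M = X \cap M$, which is a subset of $M$. Hence $|X_M| \leq |M|$ immediately, with no appeal to planarity or to the structure of the algorithm at all. The whole point is that the vertices collected in $X_M$ are, by construction, already members of the fixed optimal solution $M$, so counting them can never overshoot $|M|$.

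Next I would bound $|Y_M|$. Here I would simply invoke Fact~\ref{fac_y_less_than_x} in the case $i = M$, which yields $|Y_M| \leq |X_M|$ because each $x \in X_M$ contributes at most one vertex $d(x)$ to $Y_M$. Chaining this with the bound just obtained gives $|Y_M| \leq |X_M| \leq |M|$, as required.

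I do not expect any real obstacle in this lemma: the only subtlety is reading off the definition correctly, namely that $X_M$ is literally the intersection $X \cap M$, which makes the cardinality bound automatic. This stands in contrast to Lemma~\ref{lem_y2} and Lemma~\ref{lem_y3}, where $X_S$ and $X_L$ consist of vertices \emph{outside} $M$, so bounding their sizes genuinely requires the auxiliary graph $G_1$ and the planar edge-counting from Lemma~\ref{lem_planar_graph}. For $Y_M$ no such machinery is needed precisely because the relevant vertices already sit inside $M$.
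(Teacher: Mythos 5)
Your proof is correct and follows exactly the paper's own argument: $X_M = X \cap M \subseteq M$ gives $|X_M| \leq |M|$ directly, and chaining with Fact~\ref{fac_y_less_than_x} yields $|Y_M| \leq |X_M| \leq |M|$. No differences worth noting.
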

\begin{proof}
	The set $X_M$ is a set contains the elements which both belong to sets $X$ and $M$. Hence the order of $X_M$ is less or equal to the order of $M$ ($|X_M| \leq |M|$). Moreover, using Fact \ref{fac_y_less_than_x}, we obtain that $|Y_M| \leq |X_M| \leq |M|$.
\end{proof}

\begin{lemma}\label{lem_y2}
$|X_S| \leq c|M| \textrm{\ \  and \ \ } |Y_S| \leq c|M|.$
\end{lemma}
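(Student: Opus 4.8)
The plan is to bound $|X_S|$ by a charging argument that assigns each vertex of $X_S$ to a dominating vertex in $M$ and then shows that every vertex of $M$ receives at most $c$ charges. The bound on $|Y_S|$ will then follow immediately from Fact \ref{fac_y_less_than_x}, which gives $|Y_S| \leq |X_S|$, so essentially all the work goes into the first inequality.

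First I would exploit that $M$ is a dominating set of $G$: every $v \in X_S \subseteq V$ has some $m_v \in M \cap N^+_v$, i.e. $m_v$ either equals $v$ or is adjacent to it. Fixing one such $m_v$ for each $v$ defines a map $\phi\colon X_S \to M$ with $\phi(v)=m_v$. Since $X_S \subseteq X$ and $m_v \in N^+_v$ means $v \in N^+_{m_v}$, each preimage of a given $m$ lies in $N^+_m \cap X$; distinct preimages are distinct, so $|\phi^{-1}(m)| \leq |N^+_m \cap X|$.

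The key step, and the one I expect to carry the real content of the lemma, is to show that $|N^+_{m} \cap X| \leq c$ for every $m$ in the image of $\phi$. This is where the definition of $X_S$ is combined with the maximality built into step \ref{alg_d(v)} of \emph{Hop2Dominate}. For $v \in X_S$ the chosen vertex $d(v)$ satisfies $\delta^X_{d(v)} = \xi_v = \max_{w \in N^+_v}\{\delta^X_w\}$, and by the very definition of $X_S$ we have $\delta^X_{d(v)} = |N^+_{d(v)} \cap X| \leq c$. Since $m_v \in N^+_v$, maximality forces $\delta^X_{m_v} \leq \delta^X_{d(v)} \leq c$, that is, $|N^+_{m_v} \cap X| \leq c$. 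Hence every vertex of $M$ that receives a charge has at most $c$ inclusive neighbours in $X$.

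Combining the two observations yields $|X_S| = \sum_{m \in \phi(X_S)} |\phi^{-1}(m)| \leq \sum_{m \in \phi(X_S)} |N^+_m \cap X| \leq c\,|M|$, which is the first claim, and the second follows from $|Y_S| \leq |X_S|$. The only point requiring care is the inheritance in the key step: because $d(v)$ maximizes $\delta^X$ over \emph{all} of $N^+_v$ and $m_v$ already lies in $N^+_v$, the smallness of the $X$-neighbourhood of $d(v)$ passes to the dominator $m_v$. Once this is noticed, the remainder is a clean double count, and notably no planarity is needed here — in contrast to the analysis of $X_L$.
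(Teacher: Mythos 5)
Your proposal is correct and follows essentially the same route as the paper: both arguments observe that since $d(v)$ maximizes $\delta^X$ over $N^+_v$ and $\delta^X_{d(v)}\leq c$ by the definition of $X_S$, every $m\in M$ dominating a vertex of $X_S$ satisfies $\delta^X_m\leq c$, and then count that each such $m$ can dominate at most $c$ vertices of $X_S$. The concluding step via Fact~\ref{fac_y_less_than_x} is also identical to the paper's.
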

\begin{proof}
	In the step \ref{alg_D1} of the algorithm every vertex $v \in X$ adds its adjacent vertex $w \in N^+_v$ with the biggest residual $X$ degree $\delta^X_w$ (where $\delta^X_w := |N^+_w \cap X|$) from the inclusive neighbourhood. The definition of the set $X_S$ implies that every vertex $v \in X_S$ that does not belong to $M$ has at least one neighbour in the set $M$, so that vertices from the set $X_S$ have to be dominated in the optimal solution $M$. Let us define a set $A:=N_{X_S} \cap M$ then for all $m \in A$ we have that the residual $X$ degree of $m$ is less or equal to $c$ ($\delta^X_m \leq c$). Otherwise, the vertex $v$ would not belong to $X_S$ ($v \notin X_S$) because its residual $X$ degree is bigger than $c$. Hence, every vertex $m \in A \subseteq M$ dominates at most $c$ vertices from the set $X_S$ so $|M| \geq |X_S|/c$. Using $|Y_S| \leq |X_S|$ from Fact \ref{fac_y_less_than_x}, we obtain that	$|M| \geq |X_S|/c\geq |Y_S|/c$.
\end{proof}

Our goal is to show that $|D_1 \setminus M|=O(| M |)$ so it is left to prove that the maximal number of vertices in $Y_L$ is small ($|Y_L| = O(|M|)$). For this purpose, we will use a technique of splitting the graph $G$ into bunches and then we will show that each induced region of a bunch contains many vertices from the set $M$.
We start by defining what we mean by a term {\it bunch}, which was first introduced in \cite{CHW_08b}.
\begin{defi}\label{def:bunch}
Let $G = (V,E)$ be a planar graph, $S \subseteq V$, $T \subseteq V$ and $W \subseteq V$. A $v_i$-$v_j$-path is called {\bf S-T-W-special} if it has the form $v_iuv_j$, where $v_i \in S$,  $u \in T$ and $v_j \in W$. 
\end{defi}

Although our algorithm works in planar graphs, in the analysis we assume that the given graph $G$ is plane. Let us recall some basic
theoretical graph terminology for planar graphs.
If $G$ is a plane graph in  $R^2$ then maximal open set $f$ in $R^2 \setminus G$  such that any two points in $f$ 
can be connected by a curve contained in $f$ is called a
face of $G$. Let $P,Q$ be two special $v_i$-$v_j$-paths. In
any plane drawing, graph $P \cup Q$ contains exactly one
bounded face. (We will assume here that the face is
empty if $P = Q$.) Now we set $F(P \cup Q) := f$ and
$Reg[P \cup Q] := (P \cup Q) \cup f$ where $f$ is the bounded
face in the drawing of $P \cup Q$. 

\begin{defi}\label{def:RegionR}
Let $G = (V,E)$ be a plane graph and let $v_i \in S, v_j \in W$, $T \subset V$ where $i \neq j$. A maximal set $B$ of {\it S-T-W-special} paths between $v_i$ and $v_j$ is called a {\bf S-T-W-bunch between $v_i$ and $v_j$} if there exist two distinct paths $P,Q \in B$ such that all paths from $B$ are contained in $Reg[P \cup Q]$ and no vertex from $S \cup W$ is contained
in $F(P \cup Q)$. In addition, the paths $P,Q$ will
be called {\bf the boundary paths} of $B$. Moreover if a bunch $B$ contains at least five special paths then we say that $B$ is a {\bf large bunch}.
\end{defi}

To simplify the notation, if the sets $A$, $B$, $C$ are clear from the context, we will write  {\it special paths} instead of {\it A-B-C-special paths}.
In one of the last lemmas in this paper we will consider special paths and bunches of length three. Their definition is analogous to the definitions of bunches with special paths of length two.

After defining the concepts of bunches and special paths, next, in Fact \ref{lem_bunches_amount} and Lemma \ref{lem_y3}, we will estimate their sizes. Then, in Lemma \ref{lem_many_m_in_big_bunch}, we will show that most of regions designated by the bunches contain many vertices from the set $M$.	The proof of Fact \ref{lem_bunches_amount} is quite complicated and at the beginning we show that the number of connected components of the induced subgraph is smaller than $|M|$.

\begin{lemma}\label{lem_few_components}
	Let $G=(V,E)$ be a graph and $M$ be a dominating set in $G$. If $H=(V_H, E_H)$ is a subgraph of $G$ such that $M \subseteq V_H$ and every vertex $v \in V_H \setminus M$ contains at least one adjacent vertex from $M$ then 
$ \omega(H) \leq |M|.$
\end{lemma}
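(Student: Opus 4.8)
The plan is to finish via a simple counting argument that assigns to each connected component of $H$ a distinct vertex of $M$. The key observation I would establish first is that \emph{every} connected component of $H$ must contain at least one vertex from $M$. Once this is in hand, the bound on $\omega(H)$ follows immediately, since distinct components are disjoint as vertex sets.

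To prove the observation, I would argue by contradiction. Suppose some connected component $C$ of $H$ contains no vertex from $M$. Then every vertex of $C$ lies in $V_H \setminus M$. Pick any $v \in C$. By the hypothesis of the lemma, $v$ has an adjacent vertex $u \in M$ reached by an edge of $E_H$. Since $v$ and $u$ are joined by an edge of $H$, they lie in the same connected component, so $u \in C$. But $u \in M$, contradicting the assumption that $C$ contains no vertex from $M$. Hence every component of $H$ meets $M$.

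With the observation established, I would conclude by defining a map $\phi$ from the set of connected components of $H$ to $M$, sending each component $C$ to one arbitrarily chosen vertex of $C \cap M$, which is nonempty by the step above. Because distinct components of $H$ are pairwise disjoint as vertex sets, their chosen representatives are pairwise distinct, so $\phi$ is injective. An injection from the set of components into $M$ yields $\omega(H) \leq |M|$, exactly as claimed.

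I do not anticipate a serious obstacle here; the only point requiring care is the precise reading of the adjacency hypothesis, namely that the neighbour in $M$ is reached by an edge of $H$ itself (an edge of $E_H$) rather than merely an edge of the ambient graph $G$. This is precisely what forces the neighbour into the same component and makes the contradiction in the middle step go through. If the adjacency were only guaranteed in $G$, the component of $v$ in $H$ need not contain that neighbour, and the argument would fail, so I would make this interpretation explicit at the outset.
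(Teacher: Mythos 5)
Your proposal is correct and follows essentially the same route as the paper's own proof: show every connected component of $H$ meets $M$ (using that the guaranteed neighbour in $M$ is reached by an edge of $E_H$ and hence lies in the same component), then inject components into $M$. Your explicit remark that the adjacency must be read as an edge of $H$ rather than of $G$ matches the paper's use of $\{m,v\} \in E_H$ and is the right point to be careful about.
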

\begin{proof}
	Let $Z_1, Z_2, \ldots , Z_k$ be a partition of $V_H$ to minimal number of  connected components. If a set $Z_i$ contains at least one vertex $v \in V_H \setminus M$ then there is a vertex $m \in M$ such that $\{m,v\} \in E_H$. Hence each connected component $Z_i$ contains at least one vertex from $M$. In other case there is no vertex $v \in V_H \setminus M$ in component $Z_i$ then since $Z_i \neq \emptyset$ thus $Z_i$ contains at least one vertex from a set $M$.  
	 We obtain that each connected component contains at least one vertex from $M$ thus $\omega(H) \leq |M|$.
\end{proof}

\begin{lemma}\label{lem_bound_number_of_bunches}
	Let $G=(V,E)$ be a planar graph and $A,B,C \subseteq V$ be subsets of vertices such that the sets are pairwise disjoint and each vertex from $B$ is adjacent to at least one vertex from each sets $A$ and $C$. Then graph $G$ contains at most $4(|A| + |C|) + \omega(V)$ $A$-$B$-$C$-bunches, where $\omega(V)$ denote the number of connected components in graph $G$.
\end{lemma}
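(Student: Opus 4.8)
The plan is to encode each bunch as a single edge of an auxiliary plane bipartite multigraph and then bound its number of edges via the planar edge counts of Lemma~\ref{lem_planar_graph}. Fix a plane drawing of $G$. For every $A$-$B$-$C$-bunch $\mathcal{B}$ between $v_i \in A$ and $v_j \in C$, with boundary paths $P,Q$, I would add to a new graph $\Gamma$ on the vertex set $A \cup C$ one edge joining $v_i$ and $v_j$, drawn as an arc inside the region $Reg[P\cup Q]$. Since every special path has the form $v_iuv_j$ with $v_i\in A$ and $v_j\in C$, the graph $\Gamma$ is bipartite with parts $A$ and $C$, and its number of edges is exactly the number of bunches.

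The first key step is to check that $\Gamma$ can be drawn without crossings, so that it is a plane multigraph. For two distinct bunches the open regions $F(P\cup Q)$ are either interior-disjoint or one contains the other; but by Definition~\ref{def:RegionR} no vertex of $A\cup C$ lies in such an open region, so a containment would force the endpoints of the inner bunch to sit inside the outer region, which is impossible unless the two bunches share both endpoints, and then maximality collapses them into one bunch. Hence distinct bunch regions have disjoint interiors, the arcs drawn inside them do not cross, and $\Gamma$ is a plane bipartite multigraph.

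Next I would split the count into the number of distinct endpoint pairs carrying a bunch and the multiplicity of parallel bunches per pair. Contracting, for each used $B$-vertex, its special path onto the $A$-endpoint exhibits the simple graph underlying $\Gamma$ as a bipartite minor of $G$; by Lemma~\ref{lem_planar_graph} it has fewer than $2(|A|+|C|)$ edges, which bounds the number of pairs. For the multiplicities, the decisive observation is that two distinct bunches between the same pair $v_i,v_j$ must be separated, in the rotation around these two vertices, by a region whose bounded face contains at least one vertex of $A\cup C$; otherwise the corresponding special paths would lie together in one maximal bunch, contradicting Definition~\ref{def:RegionR}. Running Euler's formula on $\Gamma$, every bounded bigon face (bounded by two parallel bunch-edges) then encloses at least one vertex of $A\cup C$ in its interior, and since these bigon interiors are pairwise disjoint the enclosed vertices are distinct, so the number of such faces is at most $|A|+|C|$. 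Combining the bipartite face-length bound (length $\ge 4$ for non-bigon faces) with this estimate, together with the additive correction produced by applying Euler's formula to a possibly disconnected drawing, yields the stated bound $4(|A|+|C|)+\omega(V)$, with $\omega(V)$ absorbing the per-component contributions.

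The main obstacle I expect is the multiplicity analysis: controlling parallel bunches between a fixed pair without double-charging the separating vertices when their regions are nested or when bunches to different partners interleave in the rotation. Making the charging injective, so that each surplus bunch is paid for by a distinct vertex of $A\cup C$, and tracking exactly how the constant $4$ and the additive $\omega(V)$ emerge from Euler's formula on a disconnected plane multigraph, is where the careful work lies; the planarity and bipartiteness reductions above are comparatively routine.
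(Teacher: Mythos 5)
Your overall strategy -- represent each bunch by a single arc inside its region, obtaining a plane bipartite multigraph $\Gamma$ on $A\cup C$, and then count edges -- is a legitimate and genuinely different route from the paper's. The paper instead contracts each $B$-vertex into a $C$-neighbour, fixes a spanning tree of each component of the resulting multigraph $H$, introduces ``sides'' of the tree vertices, proves (Fact~\ref{fac_two_edges_the_same_side}) that essentially at most one pair of non-tree edges belonging to distinct bunches can end in the same sides, and then counts edges of a supergraph $H'$ in which every side is materialised as a new vertex. Your first step (interior-disjointness of distinct bunch regions, via Definition~\ref{def:RegionR} and planarity of the special paths) and your observation that two parallel bunches between the same pair must be separated by a vertex of $A\cup C$ are both correct and are morally the same facts the paper exploits.

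However, the proposal has a genuine gap exactly at the point you yourself flag as ``where the careful work lies'': the injectivity of the charging of surplus parallel bunches to vertices of $A\cup C$. For a \emph{fixed} pair $v_i,v_j$ the gap regions between consecutive bunches are disjoint, so their witnesses are distinct; but across different pairs the gap regions can be nested or interleaved (e.g.\ the $4$-cycle bounding a gap of the pair $(v_i,v_l)$ may lie entirely inside the gap region of the pair $(v_i,v_j)$, since the two cycles share only $v_i$), and then a single vertex of $A\cup C$ can witness several gaps simultaneously. Moreover, once a witness $w$ is itself a $\Gamma$-vertex with incident bunch-edges, the ``bigon'' you want to apply Euler's formula to is no longer a face of $\Gamma$ of length two -- its boundary acquires the whole component of $w$ -- so the face-length accounting $2F_2+4F_{\geq 4}\leq 2|E|$ does not apply in the form you state, and the additive correction you obtain is $\omega(\Gamma)$ (which can far exceed $\omega(G)$, and is not obviously harmless once bigons and enclosed components interact) rather than the claimed $\omega(V)$. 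Resolving this -- an injective assignment of gaps to vertices that survives nesting, plus a correct Euler computation for a disconnected multigraph with occupied bigons, landing on the constant $4$ -- is precisely the content for which the paper builds its spanning-tree/sides machinery; without a substitute for it, the proposal does not yet constitute a proof.
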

\begin{proof}
	To bound the number of bunches in the graph $G$ more effort is required. First of all, we need to define a multigraph $H=(V_{H}, E_{H})$ obtained from $G$ by contracting each vertex $x \in B$ to any adjacent vertex $m \in C$ and adding edge between contracted vertices and neighbours of a vertex $x$ from a set $A$ (see Figure \ref{fig:Graph_H}). Let vertices $u,w \in V$  was contracted in the graph $H$ then we say that path $vuw$ from the graph $G$ ($v \in A$, $u \in B$, $w \in C$) corresponds to edge $e=\{v, uw\}$ in the graph $H$.  Notice that each vertex $x \in B$ is adjacent with exactly one vertex $m \in C$. \par
\begin{figure}[ht!]
	\centering
	\includegraphics[width=0.65\textwidth]{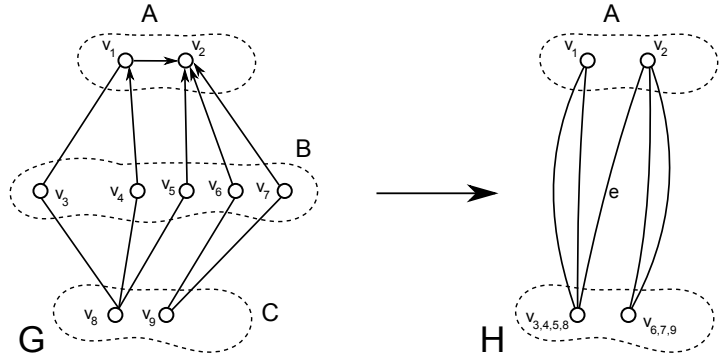}
	\caption{An example of the construction of the graph $H$. We say that path $v_2v_5v_{8}$ from the graph $G$ corresponds to edge $e$ in the graph $H$.} 
	\label{fig:Graph_H}
\end{figure}

	Let us consider a connected component of the multigraph $H$ ($H[Z_i]$), where  $Z_i \subseteq V_{H}$ denotes the set of all vertices from such component.  Then we can find spanning tree $T:=T_{H}^{Z_i}$ in a multigraph $H[Z_i]$. From a well known Lemma \ref{lem_planar_graph} we know that a multigraph $H[Z_i]$ is planar. \par
	Consider a plane drawing of $H[Z_i]$. Let for every
vertex $v \in Z_i$ and $\epsilon_v > 0$ define a ball $C_v$
around a vertex $v$ of radius $\epsilon_v$, such that $C_v$ intersects only with these edges of $H[Z_i]$ that contain $v$ and does not contain points from
other balls. We denote a connected region of $C_v \setminus T \subseteq R^2$
as a {\it side} of vertex $v$. It is obvious that every edge from $E(H[Z_i]) \setminus E(T)$ that
contains $v$ reaches $v$ by some side $s$ of a vertex $v$. In this case
we will say the edge ends in side $s$ (see Figure \ref{fig:sides}).

\begin{figure}[ht!]
	\centering
	\includegraphics[width=0.35\textwidth]{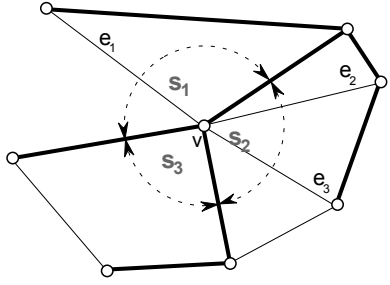}
	\caption{An example of sides in an arbitrary graph, as we can see that edges $e_2$, $e_3$ end in side $s_2$ of vertex $v$. The edges of the tree $T$ are shown in bold. }
	\label{fig:sides}
\end{figure}

We also have to prove similar fact like in paper \cite{CHW_08b}.

\begin{fact}\label{fac_two_edges_the_same_side}
The multigraph $H$ contains at most two edges $e, e'$ of $E(H[Z_i]) \setminus E(T)$ such that $e$ and $e'$ end in the same sides and corresponding special paths of edges $e,e'$ in $G$ belong to different {\it A-B-C-bunches} in corresponding graph $G[Z'_i]$, where $Z'_i$ denotes all contracted vertices in $Z_i$. Furthermore there is at most one such pair of edges $e,e'$ in a multigraph $H$.
\end{fact}
\begin{proof}
Let F be the set of $u$-$v$ edges from $E(H[Z_i]) \setminus E(T)$ that end in the same sides
of $u$ and $v$. Assume that $e,e' \in F$ belong to different bunches then 
$C_1 := uTv + e$ is a cycle and
consequently every other $u$-$v$ edge i.e. $e'$ must be contained
in one of the regions of $C_1$. Because corresponding special paths of $e$ and $e'$ are contained in different bunches in $G[Z'_i]$ thus the region $R[C_1] \cup R[C_2]$ where $C_2:=uTv+e'$ contains all vertices from $Z'_i$. 
If there is other $u$-$v$ edge $e''$ which belongs to different bunch than $e$ and $e'$
contained in the bounded face of $C_1$ or bounded face $C_2$ then there is a vertex $z$ from the set 
$Z_i$ which is contained in the
bounded region of the cycle $ueve'u$. Then e,e' and e'' end in different
side of u (contradiction).  Moreover if graph $H'$ contains such edges $e,e'$ then from planarity there is no any other pair of edges $e_2, e'_2 \in  E(H[Z_i]) \setminus E(T)$ which ends in the same sides of two vertices.
\end{proof}

\begin{figure}[ht!]
	\centering
	\includegraphics[width=0.76\textwidth]{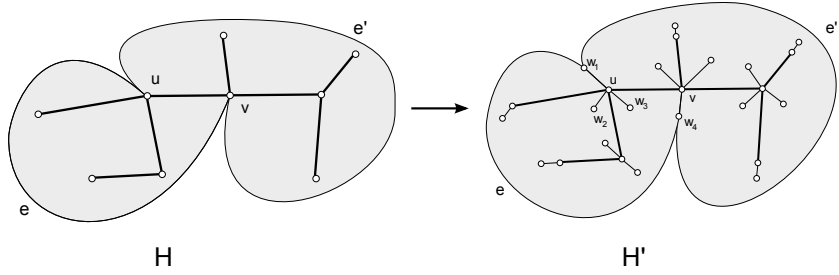}
	\caption{An example that sum of two regions $R[C_1] \cup R[C_2]$ contain all vertices from $Z_i$. The edges of the tree $T$ are shown in bold.}
	\label{fig:tree_cycle_region}
\end{figure}

Let $H'[Z_i]$ be the supergraph of $T$ obtained as follows.
For every vertex $v \in V_T$ put a vertex $w_v$ in each
side of $v$ and join it with $v$ by one edge. The set of new added vertices we denote as $V_{T'}$. Substitute
the edge from $E(H[Z_i]) \setminus E(T)$ which ends in the side
of $v$ containing $w_v$ with the edge that ends in $w_v$. Let $e_1,e_2,\ldots ,e_k \in E(H[Z_i]) \setminus E(T)$ be a maximal set of edges which corresponds to special paths in some fixed bunch from $G[Z'_i]$ then we remove edges $e_2,e_3, \ldots, e_k$ from $H'[Z_i]$.
The supergraph $H'[Z_i]$ is a planar multigraph and using Fact \ref{fac_two_edges_the_same_side} we obtain that almost every pair of vertices (except for one) could be connected by at most one edge from a set $E(H'[Z_i]) \setminus E(T)$. Let us notice that for each bunch in $G[Z'_i]$ there exists disjoint corresponding edge in $H'[Z_i]$.
	For every vertex $v \in T$ we add exactly $deg_T(v)$ new vertices, thus we can simply determine the number of new added vertices from a supergraph $H'[Z_i]$
\begin{align}\label{H1prim_vertices}
|V_{T'}| = \sum_{v\in V_T} d_T (v) = 2|T| - 2.
\end{align}
Let us observe that in our lemma we consider $A$-$B$-$C$-bunches, where sets $A$, $B$, $C$ are pairwise disjoint. Thus each special path of considered bunches has one endpoint in set $A$ and one in $B$. Hence we may assume that our supergraph is bipartite. Using Lemma \ref{lem_planar_graph}, Fact \ref{fac_two_edges_the_same_side}  and equation (\ref{H1prim_vertices}) we obtain that number of edges
\begin{align}
||H'[Z_i]|| \leq 2|V_{T'}| + |T'| + (|T|-1) + 1 \leq 7|H[Z_i]| - 6.
\end{align}
Notice that edges between vertices from a spaning tree $T$ and new vertices $V_{T'}$ was added in supergraph $H'$ but not exists in $T$ and moreover some edges (i.e. $w_3w_4$) belong to the same bunch. We can omit such edges in our calculation, thus the maximal number of bunches in the graph $G[Z_i]$ is less than 
$4|Z_1|$.	Unfortunately, the graph $G$ may not be connected, therefore the number of bunches $\mathcal{B}_1$ may be greater than $\sum_{i} 4|Z_i|$ due to some bunch $B$ could be contained in a region of other bunch $B'$. If we consider creating a multigraph $H$ by sequentially adding connected 
components then in $i$-th step after adding corresponding $G[Z_i]$ component we create at most $4|Z_i|+1$ new bunches. 
So a graph $G$ contains at most $4(|A|+ |C|) + \omega(G)$ bunches.  \par
\end{proof}
\begin{fact}\label{lem_bunches_amount}
	Let $A:=Y_L \setminus M$, $B:=X_L \setminus Y_L$ and $C:=M$.  Then the graph $G_1$ contains at most $4|Y_L \setminus M|+ 5|M|$ A-B-C-bunches. This set of bunches we denote by $\mathcal{B}_1$.
\end{fact}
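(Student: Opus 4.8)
The plan is to obtain the stated bound as a direct combination of Lemma~\ref{lem_bound_number_of_bunches} and Lemma~\ref{lem_few_components}, applied to the graph $G_1$ with the three sets $A = Y_L \setminus M$, $B = X_L \setminus Y_L$ and $C = M$. The only real work is to check that these sets meet the hypotheses of Lemma~\ref{lem_bound_number_of_bunches}, after which the arithmetic closes the gap.

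First I would verify pairwise disjointness. Since $X_M = X \cap M$, every vertex of $X_L = X \setminus (X_M \cup X_S)$ lies in $X \setminus M$, so $B \subseteq X \setminus M$ is disjoint from $C = M$; also $A = Y_L \setminus M$ is disjoint from $C$ by definition, and $A \subseteq Y_L$ is disjoint from $B = X_L \setminus Y_L$ because $B$ explicitly excludes $Y_L$. Next I would check the adjacency requirement. Each $x \in B \subseteq X_L \subseteq V_1 = V(G_1)$ satisfies $x \notin M$, so by clause (iii) of Definition~\ref{definition_arrow_graph} it has a neighbour in $M = C$ inside $G_1$. For adjacency to $A$, note that for $x \in X_L$ the chosen vertex $d(x) \in Y_L$ is a genuine neighbour of $x$ (genuine because $x \notin Y_L$ forces $x \neq d(x)$, and the edge $\{x,d(x)\}$ survives in $G_1$ by clause (ii) of the definition); whenever $d(x) \notin M$ this gives a neighbour of $x$ in $A = Y_L \setminus M$.

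The one point that needs care is the possibility that $d(x) \in M$, in which case $x$ has no forced neighbour in $A$. I would handle this by observing that such a vertex, and any other vertex of $B$ with no neighbour in $A$, cannot be the middle vertex of any $A$-$B$-$C$-special path, since a special path $v_i u v_j$ requires $u$ adjacent to both an $A$-endpoint $v_i$ and a $C$-endpoint $v_j$. Discarding these vertices from $B$ leaves the family of $A$-$B$-$C$-bunches unchanged, and the bound of Lemma~\ref{lem_bound_number_of_bunches} is $4(|A| + |C|) + \omega(G_1)$, which does not depend on $|B|$ at all. Hence the hypotheses may be assumed to hold without affecting the count, and Lemma~\ref{lem_bound_number_of_bunches} yields at most $4\left(|Y_L \setminus M| + |M|\right) + \omega(G_1)$ bunches.

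Finally I would bound $\omega(G_1)$. By clause (iii) of Definition~\ref{definition_arrow_graph} every vertex of $V(G_1)$ either lies in $M$ or has a neighbour in $M$, so the vertices of $M$ present in $G_1$ form a dominating set of $G_1$ and Lemma~\ref{lem_few_components} gives $\omega(G_1) \leq |M|$. Substituting this into the previous estimate produces $4|Y_L \setminus M| + 4|M| + |M| = 4|Y_L \setminus M| + 5|M|$, which is exactly the claimed bound on $|\mathcal{B}_1|$. I expect the adjacency verification, and in particular the argument that vertices with $d(x) \in M$ may be ignored because the count is independent of $|B|$, to be the main obstacle; everything else is disjointness bookkeeping and a one-line arithmetic substitution.
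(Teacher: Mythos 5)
Your proposal is correct and follows exactly the route the paper takes: the paper's own proof is the one-line remark that the fact ``follows directly from Lemma~\ref{lem_few_components} and Lemma~\ref{lem_bound_number_of_bunches},'' and your argument simply makes explicit the hypothesis checks (disjointness, adjacency of $B$-vertices to $A$ and $C$, and the bound $\omega(G_1)\le |M|$) that the paper leaves implicit. Your observation that vertices of $B$ lacking a neighbour in $A$ can be discarded because the bound $4(|A|+|C|)+\omega(G_1)$ is independent of $|B|$ is a worthwhile clarification of a detail the paper glosses over.
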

\begin{proof}
	This follows directly from Lemma \ref{lem_few_components} and Lemma \ref{lem_bound_number_of_bunches}.
\end{proof}

\begin{lemma}\label{lem_many_m_in_big_bunch}
	Let $B \in \mathcal{B}_1$ be a bunch such that $B$ contains at least five {\it $(Y_L \setminus M)$-$(X_L \setminus Y_L)$-$M$-special paths} in the graph $G_1$ ($b_B \geq 5$). Then
	$|M \in F(B)| \geq \left\lceil \frac{b_B-3}{2}\right\rceil$ , 
where $M \in F(B) := \{m \in M : m \in F(P,Q)$ and $P$, $Q$ are boundary paths of a bunch $B \}$.
\end{lemma}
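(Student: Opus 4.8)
The plan is to work directly with the plane drawing of the bunch. Since all $b_B$ special paths of $B$ join the same endpoints $v_i\in Y_L\setminus M$ and $v_j\in M$ and lie inside $Reg[P\cup Q]$, I would first fix the drawing and order the paths $\pi_1,\dots,\pi_{b_B}$ by their nesting inside the region, so that $P=\pi_1$ and $Q=\pi_{b_B}$ are the boundary paths and $u_1,\dots,u_{b_B}\in X_L\setminus Y_L$ are the (distinct) middle vertices. Two consecutive paths $\pi_k,\pi_{k+1}$ form a four-cycle $v_i u_k v_j u_{k+1}$ whose bounded face I call the lens $L_k$; the lenses $L_1,\dots,L_{b_B-1}$ have pairwise disjoint interiors and together tile the interior $F(P\cup Q)$ of $Reg[P\cup Q]$.

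The heart of the argument is to force vertices of $M$ into these lenses. Here I would use the Jordan-curve property recalled just before Lemma \ref{lem_planar_graph}: the four-cycle bounding $L_k$ separates the plane, so any vertex lying in the open interior of $L_k$, together with all its incident edges, is confined to $L_k$. Using that each interior middle vertex $u_k\in X_L$ forces a large residual degree at $d(u_k)$ (this is exactly the property separating $X_L$ from $X_S$ through the constant $c$) and that $M$ dominates all of $G$, I would argue that two consecutive interior middle vertices cannot both be dominated solely from the bounding cycle, so each relevant lens must contain a vertex of $M$ strictly in its interior. The main obstacle is precisely this forcing step: one must rule out that a single boundary vertex (in particular the common endpoint $v_j$) simultaneously serves arbitrarily many middle vertices, and one must guarantee that the witness of $M$ lies strictly inside the lens and is a genuinely new element of $M$ rather than an endpoint already on the bounding cycle.

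Granting the forcing step, the remainder is a packing argument supplied by the same confinement. A fixed $m\in M$ can lie in the interior of at most one lens, and a lens is incident to only two of the arcs, so a single $m$ can be charged to at most two consecutive paths of the bunch. Discarding the two extreme lenses adjacent to the boundary paths $P$ and $Q$ leaves the interior arcs, and after this boundary correction $b_B-3$ arcs remain to be accounted for; since the interior $M$-witnesses are distinct while each covers at most two arcs, at least $\left\lceil \frac{b_B-3}{2}\right\rceil$ distinct vertices of $M$ are required. All of them lie in $F(P\cup Q)$, hence in $M\in F(B)$, which is the asserted bound.

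I expect the division by two, together with the exact constant $3$, to be the delicate bookkeeping: rather than squeezing an $M$-vertex out of each individual lens, it is cleaner to work with a collection of pairwise non-overlapping four-cycles of the form $v_i u_k v_j u_{k+2}$, each enclosing one intermediate arc, so that the enclosed middle vertices yield distinct interior witnesses of $M$. Making the parity and the endpoint conventions consistent with the two boundary paths is where the $\left\lceil\cdot\right\rceil$ and the offset $-3$ will come from.
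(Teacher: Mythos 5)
Your skeleton --- order the special paths by nesting, tile $F(P\cup Q)$ by the lenses between consecutive paths, confine $M$-witnesses to single regions via the Jordan curve, and pack with each witness charged to at most two consecutive middle vertices --- matches the paper's proof, and the final counting would be sound. But the step you yourself flag as ``the main obstacle,'' namely forcing an $M$-vertex into the interior of (most) lenses, is exactly where the content of the lemma lies, and the mechanism you sketch for it does not work. You invoke the property separating $X_L$ from $X_S$ (that the chosen vertex $d(u_k)$ has residual $X$-degree above $c$) together with the fact that $M$ dominates the middle vertices; but every middle vertex $x_i$ of a special path $v\,x_i\,m$ is already dominated from the bounding cycle by $m$ itself, so domination of the $x_i$ forces nothing into the lens interiors, and the $X_L$/$X_S$ threshold $c$ plays no role in this lemma at all (it is only used later, in Lemma~\ref{lem_y2} and Lemma~\ref{lem_y3}). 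As written, your argument assumes the key claim rather than proving it.

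The forcing argument the paper actually uses hinges on the selection rule of step~\ref{alg_x(v)}: $x_i$ belongs to $X$ only because some vertex $u$ chose $x(u)=x_i$ as a maximum-degree vertex of $N^+_u$. By the Jordan curve, a $u$ adjacent to an interior $x_i$ either lies on the cycle $v\,x_{i-1}\,m\,x_{i+1}$, or equals $v$ (which can account for at most one middle vertex --- this is the source of the offset $3$ rather than $2$, and is missing from your ``discard the two extreme lenses'' bookkeeping), or lies strictly inside $F(v,x_{i-1},m,x_{i+1},v)$. In the last case, if $u\notin M$ and its only dominator in $M$ were $m$ itself, then $m\in N^+_u$ while $\deg_G(m)>\deg_G(x_i)$ (since $m$ meets all $b_B\geq 5$ middle vertices), so $u$ would have selected $m$ rather than $x_i$ --- a contradiction. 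Hence either $u\in M$ or $u$ has a dominator $m'\in M$ confined to the lens region; that is the witness your packing step needs. Without this degree-comparison argument the proof does not close.
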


\begin{proof}
 Let us consider the structure of a subgraph of $G$ induced by vertices contained in a region designated by a boundary of special paths of some bunch $B \in \mathcal{B}_1$ ($R[B]$). Recall that we denote a number of special paths in a bunch $B \in \mathcal{B}_1$ as $b_B$ and we take into account only bunches $B \in \mathcal{B}_1$ such that $b_B \geq 5$. Hence each considered bunch contains a vertex $v \in Y_L \setminus M$, a vertex $m \in M$ and at least five vertices from the set $X_L \setminus Y_L$ (see Figure \ref{fig:figure_some_bunch}). 
	 Moreover, a bunch $B$ creates at least $b_B-1$ disjoint regions in the graph $G \setminus B$. We will show that many of them contain vertices from $M$ and, more precisely, each region $R[B]$ contains at least $\left\lceil (b_B-3)/2 \right\rceil$ vertices from $M$. \par
\begin{figure}[h!]
	\centering
	\includegraphics[width=0.42\textwidth]{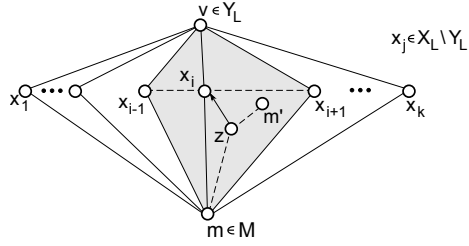}
	\caption{Example of a subgraph of $G$ for some bunch $B \in \mathcal{B}_1$. Region $F(v, x_{i-1}$, $m$, $x_{i+1})$ is marked with grey colour.}
	\label{fig:figure_some_bunch}
\end{figure}
Since vertex $x_i$ belongs to the set $X_L \setminus Y_L$, thus $x_i$ was added to  $X$ by some vertex $u \in V$ 
in the step \ref{alg_X} of the algorithm, as a vertex with the largest degree in the neighbourhood $N^+_u(G)$. It is possible that $v=u$ but note that a vertex $v$ can add only one such vertex. Let us assume that $u \neq v$. Using an assumption that $b_B \geq 5$ we obtain $deg_G(m), deg_G(v) \geq 5$ thus an interior vertex $x_i$(see Figure \ref{fig:figure_some_bunch}) could not have been added by any of the vertices $x_{i-1}$, $x_{i+1}$ or $m$ until some other node $z$ adjacent to $x_i$ exists in $F(v, x_{i-1}$, $m$, $x_{i+1},v)$(see Figure \ref{fig:figure_some_bunch}). Hence each  interior vertex $x_i \in F(B)$ is adjacent with at least one vertex $z$ from region $F(v,x_{i-1},m,x_{i+1},v)$ such that at least one of the following cases is satisfied or $x(v) = x_i$
\vspace{-5pt} 
\begin{enumerate}\itemsep=-2pt \topsep=-4pt 
 	\item[a)] $z \in M$ 
	\item[b)] $\exists m' \in M$ such that $\{m',z\} \in E_G$ and $m' \in F(v,x_{i-1},m,x_{i+1},v)$
\end{enumerate}
Let $z_1, z_2, \ldots , z_k$ be a set of vertices lying inside $F(v,x_{i-1},m,x_{i+1},v)$ and adjacent to a vertex $x_i$. Suppose that $x(v)\neq x_i$ and case a) is not satisfied for any $z_j$, so $x(v) \neq x_i$ and $z_1, z_2, \ldots , z_k \notin M$. In the optimal solution $M$ every vertex $v \in V$ belongs to $M$ or has a neighbour in this set, thus there exist vertices $m'_1, m'_2, \ldots, m'_k \in M$ such that each $m'_j$ dominates $z_j$ ($j \in \{1,2,\ldots,k \}$). Recall that there exists $z_l \in \{z_1, z_2, \ldots , z_k\}$ such that $x(z_l) = x_i$ (determined in step \ref{alg_x(v)} of 	the algorithm) so $deg_G(m'_l) \leq deg_G(x_i)$. Assume by contradiction, that case b) is also not satisfied for each $m'_1, m'_2, \ldots, m'_k$. Then $m'_1= m'_2= \ldots = m'_k = m$, but in this case $deg_G(m) > deg_G(x_i)$ and thus there is no vertex $z_l$ such that $x(z_l)=x_i$. It is a contradiction with assumption that $x_i \in X$. Hence at least one of the cases a), b) is satisfied.\par 
	In a subgraph induced by {\it boundary paths} of a bunch $B$ there are exactly $b_B-2$ internal vertices from the set $X_L \setminus Y_L$ and furthermore at most one such vertex could be chosen by vertex $v \in Y_L \setminus M$ from this bunch. So at least $b_B-3$ internal vertices of the bunch have corresponding vertex $m'\in M$ which is contained in the region $F(v,x_{i-1},m,x_{i+1},v)$. Notice that it is possible that two vertices $x_j, x_{j+1} \in X_L \setminus Y_L$ have corresponding vertices $m',m''$ in the same face (i.e $m'=m''$). Thus, we get that $|M \in F(B)| \geq \left\lceil (b_B-3)/2\right\rceil$.
\end{proof}
Now we are ready to show that the $|Y_L| = O(|M|)$.

\begin{lemma}\label{lem_y3}
Let $c \in \mathcal{N}$ and $c'>0$ be constants such that $$\frac{22c'}{cc'-24c'-2} > 0 \textrm{ \ \ and \ \ } C:= \max{\left\{\frac{22c'}{cc'-24c'-2}, c' \right\}}. \textrm{ \ \ Then \ \ } $$
$$|Y_L \setminus M| \leq C |M| \textrm{ \ \ so \ \ } |Y_L| \leq (C+1)|M|$$
\end{lemma}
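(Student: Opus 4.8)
The plan is to reduce everything to a double counting of the $(Y_L\setminus M)$-$(X_L\setminus Y_L)$-$M$-special paths of $G_1$, the very family whose bunches $\mathcal{B}_1$ were counted in Fact~\ref{lem_bunches_amount}. Writing $y:=|Y_L\setminus M|$ and $m:=|M|$, and letting $p$ denote the total number of such special paths, I would sandwich $p$ between a lower bound that is essentially linear in $cy$ and an upper bound that is linear in $y$ and $m$ with explicit small constants; solving the resulting inequality for $y$ then produces a bound of the form $y\le C\,m$. Combined with Lemma~\ref{lem_y1} and Lemma~\ref{lem_y2}, which already control $Y_M$ and $Y_S$, this finishes the estimate of $|D_1\setminus M|$. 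The closing step is then immediate: since $Y_L\cap M\subseteq M$, the bound $|Y_L\setminus M|\le C\,m$ gives $|Y_L|=|Y_L\setminus M|+|Y_L\cap M|\le (C+1)m$.

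For the upper bound on $p$ I would argue directly from the pieces already assembled. Split $\mathcal{B}_1$ into \emph{small} bunches ($b_B\le 4$) and \emph{large} bunches ($b_B\ge 5$). Every large bunch, by Lemma~\ref{lem_many_m_in_big_bunch}, carries at least $\lceil (b_B-3)/2\rceil$ vertices of $M$ inside its bounded region; these regions are interior-disjoint, so the corresponding $M$-vertices are distinct and total at most $m$. Hence $\sum_{\text{large }B}(b_B-3)\le 2m$, i.e. the large bunches contribute at most $2m+3\cdot(\text{number of large bunches})$ special paths, while each small bunch contributes at most $4$. Adding the two contributions and using Fact~\ref{lem_bunches_amount} in the form $|\mathcal{B}_1|\le 4y+5m$ yields $p\le 4|\mathcal{B}_1|+2m\le 16y+22m$; this is where the constant $22$ of the statement originates.

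The lower bound is the heart of the matter and the step I expect to be the main obstacle. Each $v\in Y_L\setminus M$ equals $d(x)$ for some $x\in X_L$ (step~\ref{alg_d(v)}), so by the definition of $X_L$ it satisfies $|N^+_v\cap X|>c$. Splitting these $X$-neighbours into the useful ones in $X_L\setminus Y_L$ (each of which, having an $M$-neighbour by part~iii) of Definition~\ref{definition_arrow_graph}, completes a special path) and the wasted ones in $X_M\cup X_S\cup(X_L\cap Y_L)$, I obtain $p\ge cy-|E_{\mathrm{bad}}|$, where $E_{\mathrm{bad}}$ collects the edges from $Y_L\setminus M$ into those three sets (the first two being $E_M$ and $E_S$, the third the $X_L\cap Y_L$-part of $E_L$). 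The difficulty is to keep $|E_{\mathrm{bad}}|$ \emph{linear in $y$ with a small constant}, rather than letting it blow up like $c\,m$ through the crude estimate $|X_S|\le c\,m$ of Lemma~\ref{lem_y2}. The device for this is the auxiliary threshold $c'$: the bipartite graphs carrying these wasted edges are planar, so by Lemma~\ref{lem_planar_graph} each has fewer than twice as many edges as vertices, and isolating via this $<2n$ bound the few wasted vertices of high degree (those meeting at least $c'$ elements of $Y_L\setminus M$) caps their total incidence by $O(y)$ together with a $\tfrac{2}{c'}\,y$ correction.

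Carrying this through for the vertices that do spawn many special paths gives, after clearing denominators, an inequality of the shape $(cc'-24c'-2)\,y\le 22c'\,m$, whence $y\le \tfrac{22c'}{cc'-24c'-2}\,m$; the positivity hypothesis $cc'-24c'-2>0$ is exactly what makes this coefficient meaningful (and forces $c$ to be large, consistent with the $(5-\varepsilon)$ lower bound). The remaining vertices, namely those $v\in Y_L\setminus M$ that the threshold $c'$ classifies as low-yield, form a family handled separately and bounded by $c'\,m$, so choosing $C=\max\bigl\{\tfrac{22c'}{cc'-24c'-2},\,c'\bigr\}$ covers both regimes at once and delivers $|Y_L\setminus M|\le C\,m$, and hence $|Y_L|\le (C+1)\,m$.
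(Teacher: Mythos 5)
Your skeleton is the same as the paper's: double-count the $(Y_L\setminus M)$-$(X_L\setminus Y_L)$-$M$-special paths, lower-bounding them by roughly $c\,|Y_L\setminus M|$ minus the ``wasted'' incidences and upper-bounding them via Fact~\ref{lem_bunches_amount} and Lemma~\ref{lem_many_m_in_big_bunch} (your computation $p\le 2|M|+4|\mathcal{B}_1|\le 16|Y_L\setminus M|+22|M|$ is in fact a slightly cleaner packaging of the paper's $|\mathcal{B}_1^{BIG}|\ge |E'_L|-(16|Y_L\setminus M|+20|M|)$ step). The gap is in your treatment of $E_{\mathrm{bad}}$, specifically the edges $E_S$ between $Y_L\setminus M$ and $X_S$. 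The paper's proof hinges on the observation that $E_S=\emptyset$: if $u\in X_S$ had a neighbour $v\in Y_L\setminus M$, then $\delta^X_v>c\ge \delta^X_{d(u)}$ would contradict the fact that $d(u)$ maximises $\delta^X$ over $N^+_u$ in step~\ref{alg_d(v)}, so $u$ could not lie in $X_S$. You miss this and instead propose a degree-threshold device: isolate the wasted vertices meeting at least $c'$ elements of $Y_L\setminus M$. That does not close the hole. The low-degree wasted vertices of $X_S$ alone can contribute up to $c'|X_S|\le cc'|M|$ incidences, which is exactly the $c\,|M|$ blow-up you say you are trying to avoid, and the high-degree ones have total incidence bounded only by $2(|X_S|+|Y_L\setminus M|)$, again carrying a $c|M|$ term. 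As described, your lower bound on $p$ therefore does not reach $\frac{cc'-24c'-2}{c'}|Y_L\setminus M|-20|M|$, and the claimed constants do not follow.

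A second, related discrepancy: your reading of $c'$ as a per-vertex yield threshold, with the ``low-yield'' part of $Y_L\setminus M$ ``handled separately and bounded by $c'|M|$'', is unsupported --- there is no argument for why that subfamily has size at most $c'|M|$. In the paper $c'$ plays a purely global role: either $|Y_L\setminus M|\le c'|M|$ and the lemma holds because $C\ge c'$, or $|M|<|Y_L\setminus M|/c'$, and this latter inequality is what converts the $|M|$ terms in the planarity bounds (e.g.\ $|E_M|\le 2(|M|+|Y_L\setminus M|)$ and the in-$Y_L\setminus M$ edges) into multiples of $|Y_L\setminus M|$ with coefficient $O(1/c')$, producing the $24c'$ and the $2$ in the denominator. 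With the observation $E_S=\emptyset$ and this global case split in place of your threshold argument, your sandwich does yield $(cc'-24c'-2)|Y_L\setminus M|\le 22c'|M|$ and the lemma; without them it does not.
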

\begin{proof}
	 We start with an outline of the proof. Our goal is to show that $|Y_L \setminus M| = O(|M|)$. To this end, we first prove that there are many of edges in the set $E_L$ ($E_L$ was specified in Definition \ref{definition_arrow_graph} on page \pageref{definition_arrow_graph}). Since $E_L$ is large set, the graph $G_1$ contains also many $(Y_L \setminus M)$-$(X_L \setminus Y_L)$-$M$-bunches. In addition using Lemma \ref{lem_many_m_in_big_bunch}, most of them contain a lot of vertices from the optimal solution $M$. Hence, finally we get that $|Y_L|=O(|M|)$.\par
Assume that $|Y_L \setminus M| > c'|M|$ (where $c' >0$). In other case lemma is proved because $C \geq c'$. To estimate the order of the set of edges $E_L$ we will first consider number of edges in sets $E_M$ and $E_S$ in a graph $G_1$. Notice that the graph $G_1$ is planar and sets $X_M$ and $Y_L \setminus M$ are disjoint ($X_M \cap (Y_L \setminus M) = \emptyset$). Hence, from the assumption that $|Y_L \setminus M| > c' |M|$ and Lemma \ref{lem_planar_graph} and Lemma \ref{lem_y1} we get that
 $|E_M| \leq 2(|X_M| + |Y_L \setminus M|) \leq 2(|M| + |Y_L \setminus M|) \leq (2|Y_L \setminus M|(c'+1))/{c'}.$\par
	Notice also that set $E_S$ is empty ($E_S = \emptyset$). Indeed, if there is an edge $e=\{u,v\}$ 
such that $u \in X_S$ and $v \in Y_L \setminus M$ then vertex $u$ would have chosen vertex 
$v \in Y_L \setminus M$, so $u$ would not be in the~set~$X_S$ ($u \notin X_S$). 
	Let $E'_L$ be a subset of $E_L$, where no edge has two endpoints inside $Y_L \setminus M$ set. Thus using planarity we obtain the following inequality
$|E'_L|  \geq |E_L|-6|Y_L \setminus M|.$
 From the definition of the set $Y_L$ we know that each vertex $v \in Y_L$ is adjacent to at least $c$ vertices from $X:=X_M \cup X_S \cup X_L$. Hence, 
\begin{align*}
|E'_L| & \geq c|Y_L  \setminus M|-|E_M|-|E_S| \geq c|Y_L \setminus M| - 0 - \frac{2|Y_L \setminus M|(c'+1)}{c'} \geq |Y_L \setminus M|\cdot \frac{cc'-2c'-2}{c'}
\end{align*}
Observe that there is a bijection from $E'_L$ to a set of $(Y_L \setminus M)$-$(X_L \setminus Y_L)$-$M$-special paths in the graph $G_1$. Thus a graph $G_1$ contains at least $|E'_L|$ special paths. \par
Now we would like to use fact~\ref{lem_bunches_amount} and lemma~\ref{lem_many_m_in_big_bunch} together. To do it we have to ensure that considered bunches contain at least five special paths (assumption of the lemma~\ref{lem_many_m_in_big_bunch}). Recall that if bunch $B$ contains at least five special paths then we say that $B$ is {\it large}. It is obvious from pigeonhole principle, that there are at most $4|\mathcal{B}_1|$ special paths 	which do not belong to large bunches. \par
	Now we will calculate order of the set of $(Y_L \setminus M)$-$(X_L \setminus Y_L)$-$M$-special paths in graph $G_1$ which belong to the set of {\it large bunches}. From Fact \ref{lem_bunches_amount} we know that $|\mathcal{B}_1| \leq 4|Y_L \setminus M|+5|M|$, so at most $16|Y_L \setminus M| + 20|M|$ considered special paths not belong to a set of {\it large bunches}. 
	Let $\mathcal{B}_1^{BIG}$ be a set of special paths which are contained in some {\it large bunch} and such that a internal vertex $x_i$ of each special path $v,x_i,m$ was not added to set $X$ by a vertex $v$ then
\begin{align*}
|\mathcal{B}_1^{BIG}| & \geq |E'_L| - (16|Y_L \setminus M| + 20|M|)
 \geq |Y_L \setminus M| \cdot \frac{cc'-24c'-2}{c'} - 20|M|
\end{align*}
 Using lemma~\ref{lem_many_m_in_big_bunch}  and observing that in calculation of a set $\mathcal{B}_1^{BIG}$ we remove four vertices for each bunch we get that 
\begin{align*}
 |M| & \geq \sum\limits_{B \in \mathcal{B}_1, b_B\geq 5} \left\lceil\frac{b_B-3}{2}\right\rceil \geq \frac{|\mathcal{B}_1^{BIG}|}{2}  \geq \frac{|Y_L \setminus M| \cdot \frac{cc'-24c'-2}{c'} - 20|M| }{2}
\end{align*}
\end{proof}
	Notice that using easily lemmas \ref{lem_y1}, \ref{lem_y2}, and \ref{lem_y3} and assuming proper values for constants $c$ and $c'$ we obtain that $|D_1 \setminus M| =O(|M|)$ and moreover using exactly the same reasoning we could prove following lemma.
\begin{lemma}\label{lem_y_d2}
Let $c$, $c'$ and $C$ be defined as in earlier lemmas. Then
$|D_2 \setminus M| \leq |M| + c|M| +C|M|.$
\end{lemma}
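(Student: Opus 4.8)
The plan is to rerun the entire analysis of $D_1$ for the second invocation of \emph{Hop2Dominate}, observing that the only difference between the two calls is the set passed in (now $D_1$ instead of $\emptyset$), and that $M$ stays a dominating set of all of $G$ no matter which vertices are already dominated. Since $D_2$ equals the set $\{d(v) : v \in X^{(2)}\}$ returned by the second call, where $X^{(2)}$ is the set built in step \ref{alg_X}, the bound will again come from controlling the three pieces of the image of $d(\cdot)$.

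First I would introduce the analogues, for the second call, of every object in Definition \ref{definition_arrow_graph} and the display following it: the set $X^{(2)}$, the partition $X_M^{(2)}, X_S^{(2)}, X_L^{(2)}$ with images $Y_M^{(2)}, Y_S^{(2)}, Y_L^{(2)}$, the edge sets $E_M^{(2)}, E_S^{(2)}, E_L^{(2)}$, and the subgraph $G_1^{(2)}$. Their defining inequalities are verbatim copies, with the single change that every residual degree $\delta^{V\setminus D}$ is now taken relative to $D = D_1$. Because $D_2 \setminus M \subseteq Y_M^{(2)} \cup Y_S^{(2)} \cup Y_L^{(2)}$ exactly as before, it suffices to re-establish the three counting lemmas for these primed sets.

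Next I would check that Fact \ref{fac_y_less_than_x} and Lemmas \ref{lem_y1} and \ref{lem_y2} transfer without change. Fact \ref{fac_y_less_than_x} is purely combinatorial and never mentions $D$; Lemma \ref{lem_y1} only uses $X_M^{(2)} \subseteq M$; and Lemma \ref{lem_y2} uses that each $v \in X_S^{(2)} \setminus M$ is dominated in $M$ (still true since $M$ dominates $V$) together with the bound on the $X^{(2)}$-degree $\delta^{X^{(2)}}$. As the latter quantity counts neighbours inside $X^{(2)}$ and has nothing to do with $D$, the charging of $X_S^{(2)}$ onto $N_{X_S^{(2)}} \cap M$ is unchanged, giving $|Y_S^{(2)}| \le |X_S^{(2)}| \le c|M|$.

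The only genuinely delicate step is Lemma \ref{lem_y3}, whose proof runs through Lemma \ref{lem_many_m_in_big_bunch}, where degrees of the vertices selected in step \ref{alg_x(v)} are compared. For the first call those residual degrees coincided with $\deg_G$ (since $N^+_\emptyset = \emptyset$), and that is how they appear in the proof of Lemma \ref{lem_many_m_in_big_bunch}. For the second call I would replace every $\deg_G$ there by the residual degree $\delta^{V\setminus D_1}$. The argument is in fact about the selection rule itself: an interior bunch vertex $x_i \in X_L^{(2)} \setminus Y_L^{(2)}$ was chosen by some $u$ as a maximiser of $\delta^{V\setminus D_1}$ over $N^+_u$, and the reason it cannot have been chosen by $x_{i-1}$, $x_{i+1}$ or $m$ without an extra witness $z$ inside the face is a comparison of exactly these $\delta^{V\setminus D_1}$ values, which does not use that $D_1 = \emptyset$. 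With this substitution $|M \in F(B)| \ge \lceil (b_B-3)/2\rceil$ holds for the bunches of $G_1^{(2)}$; the bunch count of Fact \ref{lem_bunches_amount} is unaffected (it is pure planarity, via Lemma \ref{lem_bound_number_of_bunches}); and the same estimate on $E_L^{(2)}$ yields $|Y_L^{(2)} \setminus M| \le C|M|$. Summing the three bounds gives $|D_2 \setminus M| \le |M| + c|M| + C|M|$. I expect the residual-degree substitution in Lemma \ref{lem_many_m_in_big_bunch} to be the one point needing care, since one must confirm that the inequality $\delta^{V\setminus D_1}(m'_l) \le \delta^{V\setminus D_1}(x_i)$ and the ``$b_B \ge 5$ forces a large residual degree'' step survive with $D_1$ in place of $\emptyset$.
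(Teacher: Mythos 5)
Your proposal is correct and takes essentially the same route as the paper, which in fact offers no written proof beyond the remark that ``exactly the same reasoning'' as for $D_1$ applies. Your elaboration -- rerunning the $X_M/X_S/X_L$ decomposition and the bunch argument for the second call, with residual degrees $\delta^{V\setminus D_1}$ replacing $\deg_G$ in Lemma \ref{lem_many_m_in_big_bunch} -- is precisely the intended argument, and you correctly isolate the one substitution that needs checking.
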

 Thus to prove that our algorithm returns a constant approximation of the MDS problem we have to show that $|D_3|= O(|M|)$. Let us observe that the set $D_3$ contains only vertices which are not dominated by set $D_1$. We divide a set $D_3$ to three pairwise disjoint subsets $D_3 \cap M$, $D_3':=\left\{ v \in (D_3 \setminus M): \exists u \in N_v \setminus M \wedge w(u) = v  \right\}$, and $D_3 \setminus (D'_3 \cup M)$. The orders of the sets $D_3 \cap M$ and $D_3 \setminus (D'_3 \cup M)$ is obvious so  we only have to calculate the size of the set $D'_3$.

\noindent Our last step is to prove that $|D_3 \setminus M| = O(|M|)$. 
\begin{defi}\label{definition_arrow_graph_2}
	Let graph  $G_2=(V_{2},E_{2})$ be a subgraph of $G=(V_G,E_G)$ constructed in the following way:
\vspace{-5pt}
\begin{itemize} \itemsep=-2pt \topsep=-4pt
\item[i)] $V_{2} := \left\{ v \in (D_3 \setminus M): \exists u \in N_v \setminus M \wedge w(u) = v  \right\}$ and $E_{2} = \emptyset$.
\item[ii)] For every vertex $v \in V_2$ add exactly one vertex $u \notin M$. The set of added vertices denote as $U$. Add also edge $\{u, v \}$ to $E_2$.
\item[iii)] Add minimal number of edges (from $E_G$) and nodes (from $D_1 \cup D_2$) such that each vertex $x \in V_2 \setminus U$ has adjacent vertex from the set $D_1 \cup D_2$.
\item[iv)] Add minimal number of edges (from $E_G$) and nodes (from $V_G$) such that each vertex $v \in U$ has adjacent vertex from the set $M$.
\end{itemize}
\end{defi}
\begin{figure}[ht!]
	\centering
	\includegraphics[width=0.50\textwidth]{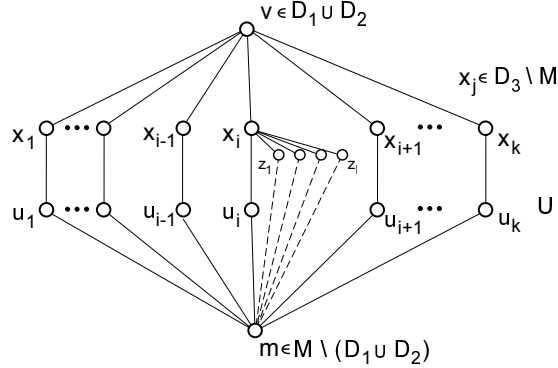}
	\caption{Example of bunch in the graph $G_2$.}
	\label{fig:figure_some_bunch_g3}
\end{figure}

 Notice that $u \in U$ cannot be adjacent to any vertex from a set $D_1$, indeed in other case a vertex $u$ would be dominated by $D_1$ and so it will omit a step \ref{alg_d(v)} of the algorithm.

\begin{fact}\label{lem_bunches_amount_G3}
	Let denote a set of $(D_1 \cup D_2)$-$D_3'$-$U$-$(M \setminus D_1)$-bunches in graph $G_2$ as $\mathcal{B}_2$. Then $|\mathcal{B}_2| \leq (9 + 8c + 8C)|M|.$
\end{fact}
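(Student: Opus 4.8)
The plan is to reduce the enumeration of these length-three bunches in $G_2$ to the length-two setting already settled in Lemma \ref{lem_bound_number_of_bunches}. Every $(D_1 \cup D_2)$-$D_3'$-$U$-$(M \setminus D_1)$-special path has the form $v_0 v_1 v_2 v_3$ with $v_0 \in D_1 \cup D_2$, $v_1 \in D_3'$, $v_2 \in U$ and $v_3 \in M \setminus D_1$, and by step ii) of Definition \ref{definition_arrow_graph_2} each $v_1 \in D_3'$ carries a \emph{unique} partner $v_2 \in U$ joined to it by the edge $\{v_1, v_2\}$. First I would contract each such partner edge, merging every $D_3'$-$U$ pair into a single vertex; this collapses each special path $v_0 v_1 v_2 v_3$ into a length-two path $v_0\,(v_1 v_2)\,v_3$ and, since the contraction happens strictly inside the paths, leaves the enclosing bounded regions and hence the bunch structure intact.

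Next I would verify that the contracted graph meets the hypotheses of Lemma \ref{lem_bound_number_of_bunches} with $A := D_1 \cup D_2$, middle set $B$ equal to the contracted pairs, and $C := M \setminus D_1$. Each contracted vertex inherits from $v_1$ an adjacency into $D_1 \cup D_2$ (step iii) and from $v_2$ an adjacency into $M$ (step iv); by the observation preceding this fact a vertex of $U$ is never adjacent to $D_1$, so that $M$-neighbour in fact lies in $M \setminus D_1 = C$. Thus every middle vertex touches both $A$ and $C$, and the lemma gives $|\mathcal{B}_2| \le 4(|A| + |C|) + \omega(G_2)$.

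It then remains to insert the cardinality estimates. The term $\omega(G_2)$ is controlled as in Lemma \ref{lem_few_components}: every component of $G_2$ contains a vertex of $M \setminus D_1$ (every $U$-vertex has such a neighbour, every $D_3'$-vertex is tied to a $U$-vertex, and the step-iii vertices of $D_1 \cup D_2$ are attached to $D_3'$-vertices), so $\omega(G_2) \le |M \setminus D_1| \le |M|$. For the endpoint sets I would combine Lemmas \ref{lem_y1}, \ref{lem_y2} and \ref{lem_y3}, which yield $|D_1 \setminus M| \le (1 + c + C)|M|$, with Lemma \ref{lem_y_d2}, which yields $|D_2 \setminus M| \le (1 + c + C)|M|$, together with $|M \setminus D_1| \le |M|$. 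Substituting these into $4(|A| + |C|) + \omega(G_2)$ and collecting terms produces a bound of the shape $(O(1) + 8c + 8C)|M|$: the coefficient $8(c+C)$ is precisely the factor $4$ of the bunch lemma acting on the $2(c+C)|M|$ contributed by $|D_1 \setminus M| + |D_2 \setminus M|$, while a careful union-bound accounting of the overlaps among $D_1$, $D_2$ and $M$ pins the additive constant down to the stated $9$.

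The step I expect to be the main obstacle is the first one: justifying rigorously that contracting the $D_3'$-$U$ partner edges really does turn each length-three bunch into a length-two bunch with the same boundary paths and regions, so that the length-two lemma applies without loss. A secondary subtlety is that the endpoint sets $A = D_1 \cup D_2$ and $C = M \setminus D_1$ need not be disjoint (they can meet on $(D_2 \cap M) \setminus D_1$), whereas Lemma \ref{lem_bound_number_of_bunches} assumes pairwise-disjoint sets; handling this overlap cleanly, rather than the arithmetic of the final substitution, is where the genuine care lies.
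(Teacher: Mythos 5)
Your proposal takes essentially the same route as the paper: the paper's proof likewise contracts each $D_3'$--$U$ partner edge into a single middle vertex, applies Lemma \ref{lem_bound_number_of_bunches} to the resulting length-two bunches, and then substitutes the cardinality bounds from the earlier lemmas (with $\omega \leq |M|$ absorbing the component term). The subtleties you flag --- that the contraction must preserve the bunch/region structure and that $D_1 \cup D_2$ and $M \setminus D_1$ need not be disjoint --- are genuine but are passed over without comment in the paper's own one-line argument, so your treatment is if anything more careful than the original.
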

\begin{proof}
	To prove this lemma we need to observe that sets $D_1 \cup D_2$, $D_3'$, $U$, and $(M \setminus D_1)$ in graph $G_2$ are pairwise disjoint. Moreover each vertex $x \in D'_3$ has exactly one adjacent vertex $u \in U$. Thus if $G'$ be a graph constructed from $G$ by contracting each such edge $\{x,u\}$ then we apply this graph in Lemma \ref{lem_bound_number_of_bunches} and obtain that 
$|\mathcal{B}_2| \leq 5(|D_1 \cup D_2| + |M \setminus (D_1 \cup D_2)) + |M| \leq  (9 + 8c + 8C)|M|.$
\end{proof} 

\begin{lemma}\label{lem_many_m_in_big_bunch_second}
	Let $B \in \mathcal{B}_2$ be a bunch such that $B$ contains at least five $(D_1 \cup D_2)$-$D'_3$-$U$-$(M \setminus Y)$-special paths in graph $G_2$ ($b_{B} \geq 5$). Then
	$|M \in F(B)| \geq \left\lceil \frac{b_{B}-4}{2}\right\rceil .$
\end{lemma}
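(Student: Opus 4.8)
The plan is to follow the template of Lemma~\ref{lem_many_m_in_big_bunch}, adapted to the length-three special paths of $G_2$ and to its extra intermediate layer $U$. First I would fix the bunch $B$ and list its $b_{B}$ special paths $P_0,P_1,\dots,P_{b_{B}-1}$ in the cyclic planar order between the two common endpoints $v_i\in D_1\cup D_2$ and $m\in M$, with $P_0$ and $P_{b_{B}-1}$ being the boundary paths that enclose $F(B)$. Each interior path $P_j$ has the form $v_i\,x_j\,u_j\,m$ with $x_j\in D_3'$ and $u_j\in U$, and its two interior vertices lie in the strip between the neighbouring paths $P_{j-1}$ and $P_{j+1}$. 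As in the previous lemma, the goal is to charge to (almost) every interior path a vertex of $M$ lying inside $F(B)$, and then to argue that the charging is at most two-to-one.

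The core step is the maximality argument, which now replaces ``largest degree'' by ``largest residual degree''. Recall that $x_j=w(u_j)$ was selected in step~\ref{alg_w(v)} so that $x_j$ maximises $\delta^{V\setminus N^+_D}_{w}$ over all $w\in N^+_{u_j}\cap N_D$. Mirroring the reasoning of Lemma~\ref{lem_many_m_in_big_bunch}, I would look at the subregion bounded by $P_{j-1}$ and $P_{j+1}$ and consider the undominated neighbours of $x_j$ together with the neighbours of $u_j$ that fall inside it. Using $b_{B}\geq 5$ to force $\deg_G(v_i),\deg_G(m)\geq 5$, I would show that neither $x_j$ nor $u_j$ can be ``serviced'' solely by the two boundary vertices $v_i,m$: if it were, then one of $v_i,m$ would have residual degree strictly larger than $\delta^{V\setminus N^+_D}_{x_j}$, contradicting the choice $w(u_j)=x_j$. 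Consequently there is an interior vertex $z$ adjacent to the relevant interior vertex for which either $z\in M$, or $z$ is dominated in $M$ by a vertex $m'$ that itself lies in the subregion; either way a vertex of $M$ is forced into $F(B)$.

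The counting then produces the bound. Of the $b_{B}$ paths, the two boundary paths must be discarded, and, as in Lemma~\ref{lem_many_m_in_big_bunch}, the bookkeeping at the endpoints forces the exclusion of further interior paths. I expect that, because of the additional $U$-layer, two interior paths must be excluded rather than the single one of the earlier proof: one from the $v_i$ side (the analogue of the case $x(v)=x_i$), and one from the $m$-endpoint side, where the $M$-neighbour of $u_j$ guaranteed by the construction of $G_2$ is precisely the boundary vertex $m$ rather than an interior one. This leaves at least $b_{B}-4$ interior paths, each contributing a witness $m'\in M$ inside $F(B)$. Since the witnesses of two consecutive subregions may coincide, pairing them as before yields $|M\in F(B)|\geq\lceil (b_{B}-4)/2\rceil$.

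The main obstacle is exactly the point where the extra layer turns $b_{B}-3$ into $b_{B}-4$: one has to transport the residual-degree maximality of $x_j=w(u_j)$ through the two-vertex interior of each path so that the forced $M$-vertex provably lands inside the subregion and not on its boundary, and to verify that \emph{no more than} two interior paths escape the charging. Handling the residual-degree bookkeeping, which depends on the current dominated set $D=D_1\cup D_2$, and guaranteeing that the per-path witnesses are distinct enough to survive the division by two, is where the argument needs the most care; the remainder is a routine adaptation of Lemma~\ref{lem_many_m_in_big_bunch}.
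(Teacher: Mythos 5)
Your proposal follows essentially the same route as the paper: replace the degree-maximality argument of Lemma~\ref{lem_many_m_in_big_bunch} by the residual-degree maximality of $x_i = w(u_i)$ from step~\ref{alg_w(v)}, deduce that each interior vertex of the bunch forces a witness $m' \in M$ into the adjacent subregion (since otherwise the common endpoint $m$ would have had at least as large a residual degree and $u_i$ would have chosen it), discard the boundary paths plus a constant number of interior ones, and charge witnesses at most two-to-one. Your bookkeeping ($b_B-4$ surviving paths divided by two) is in fact more consistent with the stated bound than the paper's own closing line, which ends with $\left\lceil (b_B-2)/4 \right\rceil$.
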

\begin{proof}
The graph $G$ induced by vertices  contained in a region of some bunch $B \in \mathcal{B}_2$ ($R[B]$) looks quite similar like a bunch from a set $\mathcal{B}_1$. 
Using the same reasoning as in the corresponding Lemma \ref{lem_many_m_in_big_bunch} we will obtain that for every vertex $x_i \in F(B) \cap D'_3$ there exists at least one vertex $z$ inside $F(v,x_{i-1},m,x_{i+1},v)$ adjacent to $x_i$ such that at least one following case is satisfied:
\vspace{-5pt} 
\begin{enumerate} \itemsep=-2pt \topsep=-4pt
 	\item[a)] $z \in M$ 
	\item[b)] $\exists m' \in M$ such that $\{m',z\} \in E_G$, $m' \in F(v,x_{i-1},m,x_{i+1},v)$ and $m' \neq m$
\end{enumerate}
	A vertex $v_i$ was added to the set $D_3$ in the step \ref{alg_D3} by vertex $u_i$ thus $$|N^+_{x_i} \cap (V \setminus N^+_D)| \geq |N^+_{m} \cap (V \setminus N^+_D)|.$$ Hence for each an interior vertex $x_i$ there exist adjacent vertices $z_1, \ldots, z_l$ which are not dominated by any vertex $v \in D_1 \cup D_2$. 
Suppose that $z_1, \ldots, z_l \notin M$ and case b) is not satisfied then vertices $z_1, \ldots, z_l$ must be adjacent with single vertex $m \in M \setminus (D_1 \cup D_2)$. Let us notice that vertex $m$ is adjacent with at least one vertex $D_1 \cup D_2$ so $d(u_i) = m$. Contradiction that vertex $u_i $ chose $x_i$ in the step \ref{alg_d(v)} of the algorithm. 
	Each internal vertex from a bunch $B$ has a corresponding vertex $m'\in M$ which is contained in one of two surrounding faces. Since two vertices $x_j, x_{j+1} \in D'_3$ could share the same corresponding vertex $m'$ thus we obtain that $|M \in F(B)| \geq \left\lceil (b_B-2)/4\right\rceil$.
\end{proof}

\begin{lemma}\label{lem_d2}
	$|D'_3| \leq (22 + 16c + 16C)|M|.$
\end{lemma}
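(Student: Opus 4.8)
The plan is to mirror the counting of Lemma~\ref{lem_y3}, but carried out in the auxiliary planar graph $G_2$ of Definition~\ref{definition_arrow_graph_2}. First I would observe that, by the construction of $G_2$, every vertex $v \in D'_3$ determines exactly one $(D_1\cup D_2)$-$D'_3$-$U$-$(M\setminus D_1)$-special path: namely the unique vertex $u\in U$ attached to $v$ in step~ii), together with a neighbour of $v$ in $D_1\cup D_2$ (step~iii) and a neighbour of $u$ in $M\setminus D_1$ (step~iv). Conversely, each such special path carries a single $D'_3$-vertex as its first interior vertex, so $|D'_3|$ equals the total number of these special paths, i.e.\ $\sum_{B\in\mathcal{B}_2} b_B$ once the paths are grouped into the bunches of $\mathcal{B}_2$.

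Next I would split $\mathcal{B}_2$ into the \emph{large} bunches ($b_B\ge 5$) and the remaining small ones. By the same pigeonhole observation used in Lemma~\ref{lem_y3}, the special paths lying in non-large bunches number at most $4|\mathcal{B}_2|$, and Fact~\ref{lem_bunches_amount_G3} bounds $|\mathcal{B}_2|\le(9+8c+8C)|M|$. For the large bunches I would invoke Lemma~\ref{lem_many_m_in_big_bunch_second}, by which each large bunch $B$ forces at least $\lceil (b_B-4)/2\rceil$ vertices of $M$ into its bounded region. The decisive point is that the $M$-vertices attributed to different large bunches are pairwise distinct: by planarity the regions delimited by the boundary paths of two bunches cannot cross, and a bunch whose face already contains an $M$-vertex cannot be nested inside another bunch of $\mathcal{B}_2$, exactly the configuration controlled in the proof of Lemma~\ref{lem_bound_number_of_bunches}. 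Summing Lemma~\ref{lem_many_m_in_big_bunch_second} over the large bunches therefore yields $\sum_{b_B\ge5}\lceil (b_B-4)/2\rceil \le |M|$, whence $\sum_{b_B\ge5}(b_B-4)\le 2|M|$.

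Combining the two contributions, $|D'_3|=\sum_{B\in\mathcal{B}_2} b_B$ splits into at most $2|M|$ coming from the large-bunch excess plus a constant multiple of $|\mathcal{B}_2|$ coming from the remaining paths; substituting the bound of Fact~\ref{lem_bunches_amount_G3} and the value $C=\max\{\,22c'/(cc'-24c'-2),\,c'\,\}$ fixed in Lemma~\ref{lem_y3} then gives, after a short computation, the claimed $|D'_3|\le(22+16c+16C)|M|$.

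The main obstacle is precisely the disjointness claim above: one must guarantee that no vertex of $M$ is charged to two different bunches, which requires controlling how bunches of $\mathcal{B}_2$ nest inside one another's regions — the same geometric difficulty that the sequential-component argument of Lemma~\ref{lem_bound_number_of_bunches} was designed to handle. A secondary complication is that $G_2$ uses special paths of length three rather than two, so every ingredient (the bunch and region definitions, and the $-4$ appearing in Lemma~\ref{lem_many_m_in_big_bunch_second} in place of the $-3$ of Lemma~\ref{lem_many_m_in_big_bunch}) must be used in its length-three form.
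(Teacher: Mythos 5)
Your proposal follows the paper's proof essentially verbatim: both count the special paths generated by $D'_3$ in $G_2$, discard the paths lying in non-large bunches using the bound of Fact~\ref{lem_bunches_amount_G3}, and charge the remaining paths to vertices of $M$ via Lemma~\ref{lem_many_m_in_big_bunch_second}, arriving at an inequality of the form $|M| \geq \bigl(|D'_3| - O(c+C)|M|\bigr)/2$. The disjointness of the charged $M$-vertices that you single out as the main obstacle is likewise left implicit in the paper, and the final constant $(22+16c+16C)$ is obtained there by the same (equally loose) bookkeeping, so your account matches the paper's argument in both structure and level of rigour.
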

\begin{proof}
	Let us notice that for any $v \in D_1$ and $w\in U$ there is no edge $\{v,w \}$ in a graph $G_2$. Indeed, in other case a vertex $w$ will be dominated in step \ref{alg_D1} or step \ref{alg_D2} of the algorithm so would not belongs to a set $U$.
Moreover every vertex $u\in U$ must be dominated in $M$ so must be adjacent to some vertex $m \in M \setminus (D_1 \cup D_2)$.  If we denote a set of special paths which are contained in set of {\it large bunches} as $\mathcal{B}_2^{BIG}$ then 
$|\mathcal{B}_2^{BIG}|  \geq |D'_3| - (20 + 16c + 16C) |M|.$
So similarly like in  a lemma~\ref{lem_many_m_in_big_bunch} we get that
\begin{align*}
 |M| & \geq \sum\limits_{B \in \mathcal{B}_2, b_B\geq 3} \left\lceil\frac{b_B-4}{2}\right\rceil \geq \frac{|\mathcal{B}_2^{BIG}|}{2} \geq \frac{|D'_3| -(20 + 16c + 16C) |M| }{2} .
\end{align*}

\end{proof}

\begin{theorem}\label{thm:main_repeat}
	Let $G=(V,E)$ be a planar graph and $D$ be a set returned by the algorithm {\it PortNumberingMds} and $M$ be an optimal solution of the Minimum Dominating Set  for a given graph $G$ then 
$$|D| \leq 636 |M|.$$
\end{theorem}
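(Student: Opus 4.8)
The plan is to assemble the lemmas already proved into a single inequality. Since the algorithm returns $D = D_1 \cup D_2 \cup D_3$, I would bound $|D| \leq |D_1| + |D_2| + |D_3|$ and estimate each summand separately, keeping the dependence on the constants $c$ and $C$ explicit so that they can be fixed only at the very end.

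First, for $D_1$: by the construction of $G_1$ we have $D_1 = Y_M \cup Y_S \cup Y_L$, so Lemmas~\ref{lem_y1}, \ref{lem_y2} and \ref{lem_y3} give $|D_1| \leq |Y_M| + |Y_S| + |Y_L| \leq |M| + c|M| + (C+1)|M| = (2+c+C)|M|$. For $D_2$, the remark preceding Lemma~\ref{lem_y_d2} notes that exactly the same reasoning applies, so Lemma~\ref{lem_y_d2} together with the trivial bound $|D_2 \cap M| \leq |M|$ yields $|D_2| \leq (2+c+C)|M|$.

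The only genuinely new work concerns $D_3$, and here I would use the three-way partition introduced before Definition~\ref{definition_arrow_graph_2}, namely $D_3 = (D_3 \cap M) \cup D'_3 \cup (D_3 \setminus (D'_3 \cup M))$ into pairwise disjoint sets. The middle block is controlled by Lemma~\ref{lem_d2}, giving $|D'_3| \leq (22 + 16c + 16C)|M|$, and $|D_3 \cap M| \leq |M|$ is immediate. For the last block, the key observation is that each $v \in D_3 \setminus (D'_3 \cup M)$ was added only because some undominated vertex $u$ with $w(u)=v$ selected it; by the definition of $D'_3$ no such selector lies outside $M$, so every $u$ with $w(u)=v$ satisfies $u \in M$. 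Since each undominated vertex picks exactly one $w(u)$, the assignment $v \mapsto u$ is injective into $M$, whence $|D_3 \setminus (D'_3 \cup M)| \leq |M|$. Combining the three gives $|D_3| \leq (24 + 16c + 16C)|M|$, and summing the three blocks produces a bound of the form $|D| \leq (28 + 18c + 18C)|M|$.

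The final step, which I expect to be the only delicate part, is choosing the constants. They are not free: Lemma~\ref{lem_y3} forces $c \in \mathcal{N}$ and $c' > 0$ with $cc' - 24c' - 2 > 0$ (so $c \geq 25$) and sets $C = \max\{22c'/(cc'-24c'-2),\, c'\}$. Minimizing $28 + 18(c+C)$ therefore amounts to trading the linear growth in $c$ against the decay of the $C$-term, the balance occurring where $22c'/(cc'-24c'-2) = c'$, i.e. $(c-24)c' = 24$. The main obstacle is precisely this constrained minimization: one must check feasibility and then pick an integer $c$ with a matching real $c'$ (roughly $c \approx 29$, $c' \approx 5$) that pushes the coefficient down to the claimed value. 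All the graph-theoretic content is already packaged inside the earlier lemmas, so what remains is this bookkeeping together with the optimization that certifies the explicit constant $636$.
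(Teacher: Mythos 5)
Your proposal follows essentially the same route as the paper: assemble Lemmas~\ref{lem_y1}, \ref{lem_y2}, \ref{lem_y3}, \ref{lem_y_d2} and \ref{lem_d2}, bound $D_3\cap M$ and $D_3\setminus(D'_3\cup M)$ by injections into $M$, and then optimize over the admissible constants. Your reading of the constrained minimization is also correct: balancing $22c'/((c-24)c'-2)=c'$ gives $c'=24/(c-24)$, and minimizing $c+24/(c-24)$ over integers $c\ge 25$ forces $c=29$, $C=c'=4.8$, hence $c+C=33.8$.

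The one place your bookkeeping does not quite deliver the stated constant is the initial decomposition $|D|\le|D_1|+|D_2|+|D_3|$. This charges the intersection with $M$ three times (once in each of $|Y_M|\le|M|$, the $D_2$ analogue, and $|D_3\cap M|\le|M|$), giving $|D|\le(28+18c+18C)|M|$, whose minimum over the feasible constants is $28+18\cdot 33.8=636.4>636$, so the claimed bound is missed by a hair. The paper instead writes $|D|\le|M|+|D_1\setminus M|+|D_2\setminus M|+|D_3\setminus M|$, paying for $D\cap M$ only once, which yields $(26+18c+18C)|M|=634.4\,|M|\le 636\,|M|$. This is a two-line repair rather than a conceptual gap; all the substantive content of your argument matches the paper's proof. (For what it is worth, the paper's own write-up is muddled at exactly this point: it first fixes $c=29$, $c'=4.8$ and later says ``if we fix $c=33$ and $c'=5.6$,'' a choice that would give $C=5.6$ and a bound of $720.8\,|M|$; only the first choice certifies $636$.)
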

\begin{proof}
	Let us fix values of constants $c$ and $c'$ in the following way 
$ c:= 29 \textrm{ \ \ and \ \ } c' = 4.8 $.
Then value of $C$ from earliest lemma is equal to $4.8$. We know that the order of the set $D$ returned by our algorithm satisfy a following inequality 
$|D| \leq |D_1 \setminus M| + |D_2 \setminus M| + |D_3 \setminus M| + |M|.$ \par
So using lemmas \ref{lem_y1}, \ref{lem_y_d2}, \ref{lem_y2} and \ref{lem_y3} we obtain that  
	$|D_1 \setminus M|, |D_2 \setminus M| \leq |Y_M| + |Y_S| + |Y_L \setminus M| \leq |M| + c|M| + C|M|.$ Using Lemma \ref{lem_d2}, and simply calculating $|D_3 \cap M|$, and $|D_3 \setminus (D'_3 \cup M)|$ we get that 
\begin{align*}
|D_3 \setminus M| &\leq |D'_3| + |M|  \leq 23|M| + 16c|M| + 16C|M|.
\end{align*}
Thus if we fix constants $c=33$ and $c'=5.6$ then
	$|D| \leq |M| + |D_1 \setminus M| + |D_2 \setminus M| + |D_3 \setminus M| \leq (26 + 18c + 18C)|M \leq 636|M|.$
\end{proof}

\begin{figure}[ht!]
	\centering
	\includegraphics[width=0.85\textwidth]{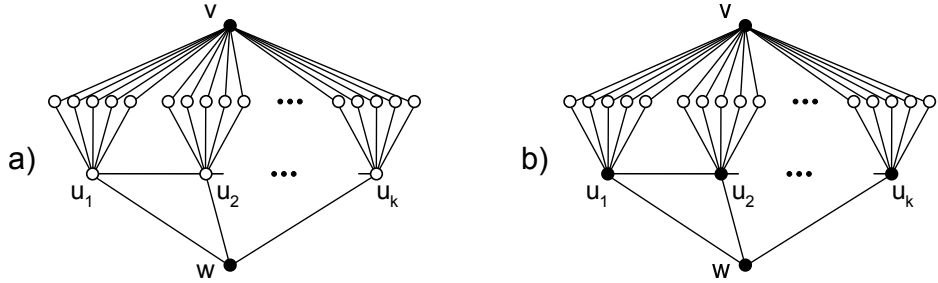}
	\caption{An example of possible results for a) an optimal algorithm b) algorithm in which each vertex chooses one the neigbor of the  largest  degree. The vertices from the resulting Dominating Set are marked(black).}
	\label{fig:example_bad_approximation}
\end{figure}

\section{Conclusion }\label{sec:Conclusion}\parskip 0pt 
In this paper we presented a constant approximation algorithm for the MDS problem in planar graphs. The algorithm is deterministic and strictly local. So nodes do not need any additional information about the structure of the graph and don't have unique identifiers. In our algorithm we use only short messages with  at most $O(\log{ n})$ bits ($\mathcal{CONGEST}$ model). \par
Recently in paper ''Lower Bounds for Local Approximation''\cite{GHS_12a} Mika G\"{o}\"{o}s et al. 
 have shown that for lift-closed bounded degree graphs models {\it PO} and {\it ID} are practically equivalent. In this paper we show that it is true for planar graphs and MDS problem.	We hope that this work will be very helpful as a hint for further comparisons of these models in other classes of graphs. \par
	Moreover the approximation factor is $636$, so there is a large gap to the known lower bound ($5-\epsilon$) from paper \cite{CHW_08a} and approximation factor $130$ from paper \cite{LOW_10tik}. An interesting issue might be a reduction of this gap in a {\it PO} or {\it ID} model.

%
\label{sect:bib}
\bibliographystyle{plain}
\bibliography{easychair}

\vfill\eject

\end{document}